\documentclass[9pt,a4paper]{article}

\usepackage{lmodern}
\usepackage[latin1]{inputenc}
\usepackage{graphicx,color}
\usepackage{epsfig}
\usepackage{verbatim}
\usepackage[english]{babel}
\usepackage{amsmath, amsthm, amssymb}
\usepackage{wrapfig}
\usepackage{neubauer}

\usepackage{enumerate}
\usepackage{csquotes}
\usepackage{placeins}
\usepackage{appendix}

\setlength{\parindent}{0mm}              % Absatz nicht einruecken
\setlength{\topmargin}{0cm}             % oberer Rand                      %
\setlength{\textheight}{24cm}            % Textlaenge                       %
\setlength{\oddsidemargin}{-1cm}          % linker Rand
\setlength{\evensidemargin}{-1cm}          % linker Rand
\setlength{\textwidth}{16cm}

\newtheorem{prop}{Proposition}
\newtheorem{theorem}{Theorem}

\makeatletter
\renewcommand{\@fnsymbol}[1]{\@arabic{#1}}
\makeatother

\author{Victoria Hutterer\footnote{Industrial Mathematics Institute, Johannes Kepler University Linz, Altenbergerstrasse 69, 4040 Linz, Austria. victoria.hutterer@indmath.uni-linz.ac.at} \ and Ronny Ramlau\footnotemark[1]\ $^{,}$\footnote{Johann Radon Institute for Computational and Applied Mathematics (RICAM) Linz, Altenbergerstrasse 69, 4040 Linz,  Austria.}}
%\email{ victoria.hutterer@indmath.uni-linz.ac.at}
\title{Non-linear wavefront reconstruction methods for pyramid sensors using Landweber and Landweber-Kaczmarz iteration}

%begin notations

%\addbibresource{pyramid_vici}
%\addbibresource{eso_sorted}
%\addbibresource{bib_pyrwfs}
%\addbibresource{Paper_CGNE_references}
%\addbibresource{ref_clif}

\begin{document}

\maketitle
%\newpage\thispagestyle{empty}\mbox{ }%\clearpage
%\setcounter{page}{2}

\begin{abstract}
Accurate and robust wavefront reconstruction methods for pyramid wavefront sensors are in high demand as these sensors are planned to be part of many instruments currently under development for ground based telescopes. The pyramid sensor relates the incoming wavefront and its measurements in a non-linear way. Nevertheless, almost all existing reconstruction algorithms are based on a linearization of the model. The assumption of a linear pyramid sensor response is justifiable in closed loop AO when the measured phase information is small but may not be reasonable in reality due to unpreventable errors depending on the system such as non common path aberrations. In order to solve the non-linear inverse problem of wavefront reconstruction from pyramid sensor data we introduce two new methods based on the non-linear Landweber and Landweber-Kaczmarz iteration. Using these algorithms we experience high-quality wavefront estimation especially for the non-modulated sensor by still keeping the numerical effort feasible for large-scale AO systems.
\end{abstract}

%%%%%%%%%%%%%%%%%%%% main body %%%%%%%%%%%%%%%%%%%%%%%%%%

\section{Introduction}
Time-varying optical perturbations introduced by the atmosphere severely degrade the image quality of ground based telescopes. Adaptive Optics (AO) systems correct these atmospheric aberrations in real-time \cite{Ha98,Roddier,Tyson00}: The facilities have devices incorporated that sense the incoming wavefronts and cancel the originated perturbations with a deformable mirror. Suitable mirror configurations are based on an accurate estimation of the shape of the incoming wavefront and can be calculated from wavefront sensor measurements. Reconstruction of the wavefront from sensor data is an inverse problem for which the underlying mathematical forward model depends on the type of the wavefront sensor (WFS). This paper is concerned with solving the inverse problem of wavefront reconstruction using a pyramid wavefront sensor. \bigskip

More than twenty years ago, the pyramid wavefront sensor (PWFS) was proposed for the first time as a promising alternative to other types of wavefront measuring devices \cite{Raga96}. For the next generation of ground based Extremely Large Telescopes (ELTs), the pyramid sensor has been gaining attention from the scientific community by setting new standards for AO correction quality. Due to outstanding results recorded on existing observing facilities, the sensor is included as baseline for many instruments on ELTs such as ESO's Extremely Large Telescope, the Thirty Meter Telescope (TMT), and the Giant Magellan Telescope (GMT). Particularly for segmented ELT pupils, the pyramid sensor seems to be the wavefront sensor of choice \cite{Esposito_2003_SPIE_pwfs_cophasing,
Esposito_2012_pwfs_NGS_SCAO_GMT,Hippler_2018,Surdej_Thesis}. \bigskip

A focus lies in the development of wavefront reconstruction algorithms for the pyramid sensor based on theoretical investigations of the model as, e.g., in \cite{HuSha18_1,KoVe07,Shatokhina_PhDThesis,Veri04} accompanied by numerical simulations or laboratory studies on optical test benches \cite{Bond_2016,EsRiFe00,Martin_ao4elt4_pwfs_bench_on-sky,Pinna_2008_HOT_bench,RaFa99,Turbide_ao4elt3_pwfs_bench,Veran_ao4elt4_pwfs_vs_sh,
Verinaud_2005_pwfs_vs_sh,Ragazzoni_ao4elt3_pwfs_lab}.
On sky, the first single star AO loop was closed on AdOpt@TNG \cite{raga2000a} at the Telescopio Nazionale Galileo with a PWFS. In recent years, remarkable operational results have been reported at the $8$~m Large Binocular Telescope (LBT) \cite{Esposito2010_AO_for_LBT,Esposito_2011_pwfs_onsky,Pedichini_2016_XAO_pwfs_LBT}. In addition to the LBT, the pyramid sensor is integrated in the AO systems of the Subaru Telescope (SCexAO), the Magellan Telescope (MagAO), the Mont Megantic Telescope (INO Demonstrator), and the Calar Alto Telescope (PYRAMIR).  \bigskip

Aside from astronomical applications, the pyramid sensor is used in adaptive loops in ophthalmology \cite{Chamot06,Daly_2010,Iglesias02} and microscopy \cite{Ig11, Ig13}. The underlying concepts are comparable to atmosphere induced perturbations sensing for adaptive optics in astronomy. In microscopy, the pyramid sensor is introduced for direct phase detection. Unstained cellular media sometimes appear transparent. Hence, measuring the imprinted phase changes induced by variations in the index of refraction is necessary for the observation of biological structures. For adaptive optics systems in the eye, the pyramid sensor is used to perform high efficient and flexible wavefront sensing to compensate ocular aberrations.  \bigskip 

From a mathematical perspective, the relation between the incoming, unknown wavefront and the measured pyramid wavefront sensor response is non-linear. 
Basically, the pyramid sensor signal can be modeled as the incoming wave convolved with the point spread function of the sensor. Due to the sinusoidal nature of the measurements, the model has a suitable linear approximation depending on the amplitude of the incoming wavefront \cite{HuSha18_1}. In closed loop AO, already corrected wavefronts are measured by the wavefront sensor. Thus, the existence of small incident wavefronts allows to assume a linear response of the PWFS. However, for open loop data or larger wavefront errors, for instance induced by non common path aberrations (NCPAs), this assumption is not fulfilled. Non common path errors appear in AO systems when the wavefront sensor does not belong to the same light path as the science camera. Then, the AO system suffers from aberration differences between the wavefront sensor and the science camera. Additionally, optical elements may be incorporated in the non common path as, for instance, in the case of Multi Conjugate AO systems. In these cases the assumption of small residual wavefronts being measured by the wavefront sensor, and further the linearity of the pyramid sensor may be violated. Non-linear wavefront reconstructors are considered as one possibility to handle the non-linearity effects of the pyramid sensor introduced by influences such as NCPAs. Currently, there still exist only a few attempts, e.g., \cite{Clare_2004,Frazin_18,Fauv16,Fauv17,korkiakoski_nonlinear_08,Korkiakoski_08,KoVe07,Viotto16} for handling the non-linearity of pyramid sensors with applications in astronomical AO. 

We introduce a new idea of non-linear wavefront reconstruction and propose to apply the non-linear Landweber/Landweber-Kaczmarz method. Both iterative algorithms have already been studied in-depth by the mathematical community with multiple applications in the field of inverse problems. \bigskip

The paper is organized as follows: 

The non-linear inverse problem of wavefront reconstruction from pyramid sensor data is introduced in Section \ref{chap_invProb}. We also consider the mathematical models of the underlying wavefront sensor as well as its approximations. In Section \ref{chap_landweber}, we adapt the Landweber and Landweber-Kaczmarz iteration to the problem of non-linear wavefront reconstruction using pyramid sensors and introduce two new methods for wavefront reconstruction, namely the non-linear LIPS (Landweber Iteration for Pyramid Sensors) and the non-linear KLIPS (Kaczmarz Landweber Iteration for Pyramid Sensors). The evaluation of the Fr\'{e}chet derivatives and the corresponding adjoints which are needed for the application of the algorithms is done in Section \ref{chap_frechet}. In Section \ref{chap_complexity}, we mention some details on the discretization of the problem and the computational complexity of the proposed methods, and we show the performance of the reconstructors using closed loop end-to-end simulations in Section \ref{chap_numerics}.

%%%%%%%%%%%%%%%%%%%%%%%%%%%%%%%%%%%%%%%%%%%%%%%%%%%%%%%%%%%%%%%%%%%%%%%%%%%%%%%%%%%%%%%%%%%%%%%%%
\section{Non-linear problem of wavefront reconstruction using pyramid sensors} \label{chap_invProb}
%%%%%%%%%%%%%%%%%%%%%%%%%%%%%%%%%%%%%%%%%%%%%%%%%%%%%%%%%%%%%%%%%%%%%%%%%%%%%%%%%%%%%%%%%%%%%%%%%

For wavefront reconstruction from pyramid sensor data the aim is to solve the non-linear pyramid sensor operator equation
\begin{equation}\label{eq:n.1}
s= \boldsymbol P \Phi,
\end{equation}
where $s=\left[s_x,s_y\right]$ denotes pyramid wavefront sensor measurements, $\boldsymbol P:\mathcal{D}\left(\boldsymbol P\right)\rightarrow \mathcal{L}_2\left(\mathbb{R}^2\right)$ the non-linear pyramid sensor operator with $\mathcal{D}\left(\boldsymbol P\right) \subseteq \mathcal{H}^{11/6}\left( \mathbb{R}^2\right)$ and $\Phi \in \mathcal{H}^{11/6}\left(\mathbb{R}^2\right)$ the unknown incoming wavefront \cite{Ellerbroek02,Eslitz13,HuSha18_1,Neub12b}. Due to the finite size of both the telescope pupil and the wavefront sensor detector, in the following denoted by $\Omega=\Omega_y\times\Omega_x$ for simplicity, the involved wavefronts $\Phi$ and sensor measurements $s$ have compact support on $\Omega$. The indices $x$ and $y$ in $\Omega_y\times\Omega_x$ indicate the dependence  on $x$ and $y$ of single lines (intervals) in the telescope aperture having annular shape. Depending on $x$ or $y$, the chords either consist of one or two parts and have varying lengths. The norms in the Hilbert spaces are considered to be the norms in $\mathcal{L}_2\left(\mathbb{R}^2\right)$ with $$\left|\left|\cdot\right|\right|_{\mathcal{L}_2\left(\mathbb{R}^2\right)}=\left|\left|\cdot\right|\right|_{\mathcal{L}_2\left(\Omega\right)}$$
because of the compact support of the involved functions  on the telescope aperture and are generally denoted by $\left|\left|\cdot\right|\right|$ throughout the paper.

Since the unperturbed data $s$ are almost never available in practice, we consider noisy data $s^\delta$ and assume $$\left|\left|s^\delta-s\right|\right|<\delta$$ for some noise level $\delta > 0$. \bigskip

We start by identifying pyramid sensor operators and derive their underlying theoretical principles. The physical configuration of the sensor is based on a Foucault knife edge and the core characteristic of a pyramid sensor is a refracting pyramidal prism of four facets that split the beam. In our wavefront reconstruction approach, we will additionally consider (as a simplification of the pyramid sensor) the roof wavefront sensor consisting of two orthogonally placed roof prisms instead of one pyramidal prism. The mathematical roof WFS model constitutes a part of the pyramid model as shown in the next Section. 

In this paper we focus mainly on a mathematical description of both sensors. Hence, we will not explain the physical principles of pyramid or roof sensors in detail and refer the reader to \cite{Raga96,Ricc98} for more precise explanations. Measurements of both wavefront sensors always consist of two data sets $s=\left[s_x,s_y\right]$, one in $x$-direction and one in $y$-direction. Due to the structure of the measurement equations introduced later in Theorem \ref{pyr_def}, we define the pyramid sensor operators $ \boldsymbol P^{\{n,c\}} := [-\tfrac{1}{2}\boldsymbol P_x^{\{n,c\}},\tfrac{1}{2}\boldsymbol P_y^{\{n,c\}}]$  where the index indicates $x$- or $y$-direction and the superscript $n$ the non-modulated sensor or $c$ the circular modulated sensor. The corresponding roof sensor operators are denoted by $\boldsymbol R^{\{n,c\}} := [-\tfrac{1}{2}\boldsymbol R_x^{\{n,c\}},\tfrac{1}{2} \boldsymbol R_y^{\{n,c\}}]$. Whenever we omit the superscripts $\{n,c\}$, then the theory is applicable to both non-modulated and circular modulated pyramid sensor as well as the linear modulated roof wavefront sensor which is not specifically reviewed in this paper. The domain $\mathcal{D}\left(\boldsymbol P\right)$ of the pyramid operator either denotes $\mathcal{D}\left(\boldsymbol P_x\right)$ or $\mathcal{D}\left(\boldsymbol P_y\right)$ respectively depending on the direction we are considering the problem (for Landweber iteration) or $\mathcal{D}\left(\boldsymbol P_x\right) \cap \mathcal{D}\left(\boldsymbol P_y\right)$ if we consider the problem as a system of equations (in case of Landweber-Kaczmarz iteration).

% %%%%%%%%%%%%%%%%%%%%%%%%%%%%%%%%%
% 
\subsection{Pyramid forward models without interference effects}
% 
%
%%%%%%%%%%%%%%%%%%%%%%%%%%%%%%%%%

In what follows we use the analytical pyramid wavefront sensor transmission mask model \cite{Feeney_PhDThesis,KoVe07,Shatokhina_PhDThesis,Veri04}. In case of circular modulation, we denote the modulation parameter by
\begin{equation}\label{eq:mod_param}
 \alpha_{\lambda} = \left(2\pi \alpha\right)/ \lambda ,
\end{equation}
with $\alpha = r\lambda/D$ for a positive integer $r$ representing the modulation radius, $\lambda$ the sensing wavelength, and $D$ the telescope diameter.

\begin{theorem}\label{pyr_def}
The pyramid wavefront sensor data in the non-linear transmission mask model are represented as
\begin{align*}
s_x^{\{n,c\}}(x,y) &= -\tfrac{1}{2}\left(\boldsymbol P_x^{\{n,c\}}\Phi\right)(x,y), \\ \notag
s_y^{\{n,c\}}(x,y) & = \tfrac{1}{2}\left(\boldsymbol P_y^{\{n,c\}}\Phi\right)(x,y),
\end{align*}
where the operators $\boldsymbol P_x^{\{n,c\}}:\mathcal{H}^{11/6}\left(\mathbb{R}^2\right)\rightarrow
\mathcal{L}_2\left(\mathbb{R}^2\right)$ in $x$-direction are given by
{\small\begin{align*}
\left(\boldsymbol P_x^{\{n,c\}}\Phi\right)(x,y)&:=\mathcal{X}_{\Omega}(x,y)\left.\dfrac{1}{\pi} \int_{\Omega_y}{\dfrac{\sin{\left[\Phi(x',y)-\Phi(x,y)\right] \times k^{\{n,c\}} (x'-x)  }}{x'-x}\ \mathrm{d}x'}\right. \\ \notag
&\left.+\mathcal{X}_{\Omega_y}(x) \dfrac{1}{\pi^3} \ p.v. \int_{\Omega_y}{\int_{\Omega_x}{\int_{\Omega_x}{\dfrac{\sin{\left[\Phi(x',y')-\Phi(x,y'')\right] \times l^{\{n,c\}} (x'-x,y''-y') }}{(x'-x)(y'-y)(y''-y)} \ \mathrm{d}y'' \ }\mathrm{d}y' \  } \mathrm{d}x' }\right.
\end{align*}
}and the operators $\boldsymbol P_y^{\{n,c\}}: \mathcal{H}^{11/6}\left(\mathbb{R}^2\right)\rightarrow
\mathcal{L}_2\left(\mathbb{R}^2\right)$ in $y$-direction are given by
{\small\begin{align}\label{eq:3.3} \notag
\left(\boldsymbol P_y^{\{n,c\}}\Phi\right)(x,y)&:=\mathcal{X}_{\Omega}(x,y)\left.\dfrac{1}{\pi} \int_{\Omega_x}{\dfrac{\sin{\left[\Phi(x,y')-\Phi(x,y)\right] \times k^{\{n,c\}} (y'-y) }}{y'-y}\ \mathrm{d}y'} \right.\\ \notag
&\left.+\mathcal{X}_{\Omega_x}(y) \dfrac{1}{\pi^3}\ p.v. \int_{\Omega_y}{\int_{\Omega_x}{\int_{\Omega_y}{\dfrac{\sin{\left[\Phi(x',y')-\Phi(x'',y)\right. \times l^{\{n,c\}} (x''-x',y'-y) }}{(x'-x)(y'-y)(x''-x)} \ \mathrm{d}x'' \ }\mathrm{d}y' \ } \mathrm{d}x' }\right].
\end{align}
}The functions $k^{\{n,c\}}$ are defined by $k^n (x) := 1 , \ k^c (x): = J_0 (\alpha_{\lambda} x) $,
and the functions $l^{\{n,c\}}$ by $l^n(x,y): = 1$ and
\begin{equation*}
 l^c(x,y) := \dfrac{1}{T} \int_{-T/2}^{T/2} \cos [ \alpha_{\lambda} x \sin (2\pi t/T) ] \cos [ \alpha_{\lambda} y \cos (2\pi t/T) ] \ \mathrm{d}t .
\end{equation*}
The function $J_0$ denotes the zero-order Bessel function of the first kind, i.e.,
\begin{equation*}
 J_0 (x) = \dfrac{1}{\pi} \int_0^{\pi} \cos ( x \sin t ) \ \mathrm{d}t 
\end{equation*}
with modulation parameter $\alpha_{\lambda}$ defined in~\eqref{eq:mod_param}.
\end{theorem}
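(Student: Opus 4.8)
The plan is to derive the measurement equations directly from the physical transmission-mask model by tracking the complex field through the sensor with Fourier optics. I would start from the pupil field $u_0(x,y)=\mathcal{X}_\Omega(x,y)\,e^{i\Phi(x,y)}$, propagate it to the focal plane by a Fourier transform $\mathcal{F}$, multiply there by the transmission mask $m$ of the refracting pyramid, and propagate back to the detector plane by $\mathcal{F}^{-1}$. The decisive observation is that the ideal four-facet mask is, up to normalisation, a combination of the four quadrant indicators and hence a linear combination of $1$, $\operatorname{sgn}(\xi)$, $\operatorname{sgn}(\eta)$ and $\operatorname{sgn}(\xi)\operatorname{sgn}(\eta)$ in the focal-plane coordinates $(\xi,\eta)$. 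Since $\mathcal{F}^{-1}[\operatorname{sgn}]$ is, up to a constant and a factor $i$, the principal-value kernel $\mathrm{p.v.}\,1/(\pi\,\cdot\,)$, the convolution theorem turns the single-sign and the product-sign contributions into one- and two-dimensional Hilbert-type integrals. This is exactly the origin of the single kernel $1/(x'-x)$ in the first term and of the product kernel $1/[(x'-x)(y'-y)(y''-y)]$ in the second, principal-value term; in particular the single-integral part reproduces the one-dimensional roof-sensor contribution already announced in the text.

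Next I would form the detector intensity $|u_{\mathrm{det}}|^2$ and assemble the normalised slope signal $s_x$ as the antisymmetric difference of the intensities gathered behind the facets. Expanding $|u_{\mathrm{det}}|^2$ produces self terms together with cross terms between the undeviated beam and its Hilbert transforms; because the transform of the sign function carries a factor $i$, the contribution of each cross term that survives in the slope combination is its imaginary part, and the phasor product $e^{i\Phi(x',\cdot)}\,e^{-i\Phi(x,\cdot)}$ is thereby converted into $\sin[\Phi(x',\cdot)-\Phi(x,\cdot)]$. This is where the sinusoidal nonlinearity enters, and it is also the stage at which the hypothesis of the subsection is used: the ``without interference effects'' assumption lets me discard the cross terms describing the mutual interference of the separated beams, retaining only the two contributions displayed in the statement. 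Setting $k^n\equiv 1$ and $l^n\equiv 1$ then yields the non-modulated operators $\boldsymbol P_x^n$ and $\boldsymbol P_y^n$, while the $y$-direction formula follows by interchanging the roles of the two coordinates.

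For the circular-modulated sensor I would add, before the prism, a time-dependent tip-tilt $\Phi(x,y)\mapsto \Phi(x,y)+\alpha_\lambda\,(x\sin(2\pi t/T)+y\cos(2\pi t/T))$ and average the resulting signal over one modulation period $t\in[-T/2,T/2]$. In the one-dimensional cross term the points differ only in the considered direction, so the modulation factor reduces to $e^{\,i\alpha_\lambda (x'-x)\sin(2\pi t/T)}$, whose period average is $\tfrac1\pi\int_0^\pi\cos[\alpha_\lambda(x'-x)\sin t]\,\mathrm{d}t=J_0(\alpha_\lambda(x'-x))$, giving $k^c$. In the two-dimensional cross term the modulation factor is separable in the two directions; symmetrising each factor to its even part and averaging produces precisely the double-cosine integral defining $l^c(x'-x,y''-y')$, so that the modulated operators $\boldsymbol P_x^c$, $\boldsymbol P_y^c$ emerge with the claimed $k^c$ and $l^c$.

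I expect the main obstacle to be the rigorous handling of the two-dimensional, triple principal-value term. One has to justify interchanging the period average, the inverse Fourier transform and the iterated singular integrations, to verify convergence of these integrals in the principal-value sense, and to confirm that the resulting operators indeed map $\mathcal{H}^{11/6}(\mathbb{R}^2)$ into $\mathcal{L}_2(\mathbb{R}^2)$ as asserted. Keeping precise track of which cross terms are retained and which are dropped under the no-interference simplification, and matching all signs and constant prefactors so that the $-\tfrac12$ and $+\tfrac12$ in $s_x$ and $s_y$ come out correctly, will be the most delicate part of the bookkeeping.
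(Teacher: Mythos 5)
Your outline reproduces the standard optics-based transmission-mask derivation --- pupil field $\mathcal{X}_\Omega e^{i\Phi}$, focal-plane quadrant masks decomposed into $1$, $\operatorname{sgn}(\xi)$, $\operatorname{sgn}(\eta)$, $\operatorname{sgn}(\xi)\operatorname{sgn}(\eta)$, inverse transforms giving the one- and two-dimensional principal-value Hilbert kernels, intensity differences whose surviving imaginary cross terms produce the sine of phase differences, and period-averaging of the circular tip-tilt to obtain $J_0$ and $l^c$ --- which is precisely the route of the works the paper cites for this theorem (\cite{KoVe07} for the optics-based version, \cite{HuSha18_1} for the rigorous distribution-theoretic one); the paper itself offers no independent argument beyond that citation. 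Your approach therefore matches the paper's (deferred) proof, and the delicate points you correctly flag --- convergence of the triple principal-value integral, interchange of the modulation average with the singular integrations, and the sign and prefactor bookkeeping behind the $\mp\tfrac12$ --- are exactly what the cited distribution-theoretic treatment is invoked to settle.
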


\begin{proof}
An optics-based derivation is given in \cite{KoVe07}, or a more mathematical version using distribution theory in \cite{HuSha18_1}.
\end{proof}

Due to the involved singularities, the integrals are defined in the Cauchy principal value ($p.v.$) meaning. The principal value stems from the pyramid transmittance function incorporated as a description of the pyramidal prism in the derivation of the forward model \cite{HuSha18_1,KoVe07,Veri04}.

%%%%%%%%%%%%%%%%%%%%%%%%%%%%%%%%%%%%%% 
 
The roof sensor operators consist of the first term of the pyramid sensor operators. For the development of model-based reconstruction algorithms, the roof sensor is either considered as a standalone wavefront sensor or used as a simplification of the full pyramid sensor model.

\begin{theorem}
The forward models of the non-modulated and modulated roof WFS in the transmission mask approach are represented by
\begin{align*}
s_x^{\{n,c\}} (x,y) &= -\tfrac{1}{2}\left( \boldsymbol R_x^{\{n,c\}} \Phi \right) (x,y), \\
s_y^{\{n,c\}} (x,y) &= \tfrac{1}{2}\left( \boldsymbol R_y^{\{n,c\}} \Phi \right) (x,y). 
\end{align*}
with the operators $\boldsymbol R_x^{\{n,c\}}:\mathcal{H}^{11/6}\left(\mathbb{R}^2\right)\rightarrow\mathcal{L}_2\left(\mathbb{R}^2\right)$ and  $\boldsymbol R_y^{\{n,c\}}:\mathcal{H}^{11/6}\left(\mathbb{R}^2\right)\rightarrow\mathcal{L}_2\left(\mathbb{R}^2\right)$  defined by
\begin{align}\label{eq:3.6a} 
\left( \boldsymbol R_x^{\{n,c\}} \Phi\right) (x,y) &:=\mathcal{X}_{\Omega}(x,y)   \dfrac { 1 } { \pi } \int_{\Omega_y} \dfrac{ \sin [ \Phi (x',y) - \Phi (x,y) ] \times k^{\{n,c\}} (x'-x) } {x'-x} \ \mathrm{d}x', \\ \label{eq:3.6b}
\left( \boldsymbol R_y^{\{n,c\}} \Phi\right) (x,y) &:=\mathcal{X}_{\Omega}(x,y)   \dfrac { 1 } { \pi }\int_{\Omega_x} \dfrac{ \sin [ \Phi (x,y') - \Phi (x,y) ] \times k^{\{n,c\}}(y'-y) } {y'-y} \ \mathrm{d}y' \\ \notag
\end{align}
for the functions $k^{\{n,c\}}$ given as in Theorem \ref{pyr_def}.
\end{theorem}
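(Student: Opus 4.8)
The plan is to re-run the derivation that yields the pyramid model of Theorem \ref{pyr_def}, changing only the transmittance function so that it reflects the geometry of the roof prisms rather than the pyramidal prism. The essential physical input is that a single roof prism splits the incoming beam along one direction only, whereas the four-faceted pyramidal prism splits it in two directions simultaneously and thereby couples $x$ and $y$. In the transmission mask approach this means the roof mask associated with the $x$-direction depends only on the variable conjugate to $x$ (the prism edge runs along $y$), in contrast to the genuinely two-dimensional pyramid mask.

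First I would recall, following \cite{HuSha18_1,KoVe07,Veri04}, how the pyramid transmittance function splits into two contributions: a one-dimensional sign-type part, which after the diffraction integral and a change of variables produces the single Hilbert-transform-like integral (the first term of $\boldsymbol P_x^{\{n,c\}}$), and a two-dimensional product part of the form $\operatorname{sgn}(x)\operatorname{sgn}(y)$, which produces the triple principal-value integral weighted by $l^{\{n,c\}}$ (the second term). The modulation enters through averaging the prism phase over the modulation path, giving the kernels $k^{\{n,c\}}$ and $l^{\{n,c\}}$ exactly as stated in Theorem \ref{pyr_def}.

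Next I would substitute the one-dimensional roof mask into the same convolution of the incoming wave with the sensor point spread function. Since this mask carries no $\operatorname{sgn}(x)\operatorname{sgn}(y)$ component, the two-dimensional principal-value term never arises, and the derivation collapses to the single integral in \eqref{eq:3.6a}; the orthogonally placed roof prism aligned with $x$ gives \eqref{eq:3.6b} by the same argument with the roles of $x$ and $y$ interchanged. The modulation kernel is again $k^{\{n,c\}}$, with $k^n \equiv 1$ and $k^c = J_0(\alpha_\lambda \,\cdot)$, because the one-dimensional averaging over the modulation path reproduces the zero-order Bessel function precisely as in the pyramid case.

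I expect the main obstacle to be the bookkeeping in the first step: isolating exactly which piece of the full pyramid transmittance is responsible for the coupling term, and verifying that the roof mask genuinely lacks that piece rather than merely suppressing it. Once the transmittance decomposition is made explicit, the remainder is a direct specialization of the pyramid derivation. Indeed, as already noted in the text, the roof operators are by construction the first term of the pyramid operators, so the statement follows by discarding the two-dimensional contribution throughout.
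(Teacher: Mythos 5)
Your proposal is correct and follows essentially the same route as the paper, which itself offers no derivation but simply delegates to \cite{BuDa06,Shatokhina_PhDThesis,Veri04}, where the roof model is obtained exactly as you describe: by inserting the one-dimensional roof transmittance mask (lacking the $\operatorname{sgn}(x)\operatorname{sgn}(y)$ product responsible for the triple principal-value term) into the same diffraction/convolution computation that yields Theorem \ref{pyr_def}. Your identification of the surviving kernel $k^{\{n,c\}}$ via the modulation average, and of \eqref{eq:3.6b} by symmetry, matches the cited derivations and the paper's own remark that the roof operators are the first term of the pyramid operators.
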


\begin{proof}
See \cite{BuDa06,Shatokhina_PhDThesis,Veri04}.
\end{proof}

The operators $\boldsymbol P^{\{n,c\}}_x$ and $\boldsymbol P^{\{n,c\}}_y$ as well as $\boldsymbol R^{\{n,c\}}_x$ and $\boldsymbol R^{\{n,c\}}_y$ are constructed in the same way. Please note that due to the similar structure of the operators, results for one direction are immediately transferred to the second direction. \bigskip

%%%%%%%%%%%%%%%%%%%%%%%%%%%%%%%%%%%%%%%%%
%\subsection{Linearization of the operators}
In closed loop AO, the pyramid sensor measures already corrected (and thus small) wavefront aberrations.  A linearization of the pyramid WFS operators $\boldsymbol P^{\{n,c\},lin}_x$ and roof WFS operators $\boldsymbol R^{\{n,c,l\},lin}_x$ is obtained either by means of the Fr\'{e}chet derivative as shown in \cite{HuSha18_1} or by the replacement 
$$ \sin [ \Phi (x',y) - \Phi (x,y) ] \approx \Phi (x',y) - \Phi (x,y)  $$
which is according to \cite{BuDa06} valid in closed loop AO because of
$$\left|\Phi (x',y) - \Phi (x,y) \right| << 1 .$$

\begin{prop}\label{3.8b}
The linear approximations $\boldsymbol R_x^{\{n,c,l\},lin}:\mathcal{H}^{11/6}\left(\mathbb{R}^2\right)\rightarrow\mathcal{L}_2\left(\mathbb{R}^2\right)$ of the roof sensor operators in $x$-direction are given by
\begin{equation}\label{eq:3.81}
\left(\boldsymbol R_x^{\{n,c\},lin}\Phi\right)(x,y):=\mathcal{X}_{\Omega}(x,y)\dfrac{1}{\pi}\int_{\Omega_y}{\dfrac{ [\Phi(x',y)-\Phi(x,y)] \times k^{\{n,c\}} (x'-x) }{x'-x}\ \mathrm{d}x'}
\end{equation}
and $\boldsymbol R^{\{n,c,l\},lin}_y$ accordingly.
\end{prop}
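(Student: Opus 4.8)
The plan is to derive the linear operator \eqref{eq:3.81} by the two equivalent routes indicated just before the statement---the small-angle substitution and the Fr\'{e}chet derivative at $\Phi=0$---and to verify that the result is a well-defined map $\mathcal{H}^{11/6}(\mathbb{R}^2)\to\mathcal{L}_2(\mathbb{R}^2)$. Taking the small-angle route first, I would start from the full non-linear roof operator in \eqref{eq:3.6a} and replace $\sin[\Phi(x',y)-\Phi(x,y)]$ by its argument $\Phi(x',y)-\Phi(x,y)$, which is justified in closed loop by $|\Phi(x',y)-\Phi(x,y)|\ll 1$. Since the indicator $\mathcal{X}_{\Omega}(x,y)$, the modulation factor $k^{\{n,c\}}(x'-x)$ and the singular kernel $1/(x'-x)$ are untouched by this replacement, the integral reduces to exactly the right-hand side of \eqref{eq:3.81}.

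To confirm that this coincides with the Fr\'{e}chet derivative at $\Phi=0$, I would differentiate the integrand of \eqref{eq:3.6a} in a direction $h\in\mathcal{H}^{11/6}$. The G\^{a}teaux variation of $\sin[\Phi(x',y)-\Phi(x,y)]$ equals $\cos[\Phi(x',y)-\Phi(x,y)]\,[h(x',y)-h(x,y)]$; evaluating at $\Phi=0$ and using $\cos 0=1$ shows that the derivative applied to $h$ is precisely the operator in \eqref{eq:3.81} with $\Phi$ replaced by $h$. Renaming $h$ as $\Phi$ returns the stated formula, so both routes agree.

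The main obstacle is the analytic justification rather than the algebra. One must check that the singular kernel $1/(x'-x)$ does not destroy integrability: since both $\sin[\Phi(x',y)-\Phi(x,y)]$ and the linear difference $\Phi(x',y)-\Phi(x,y)$ vanish as $x'\to x$ for wavefronts of the required Sobolev regularity, the integrand stays principal-value integrable and the linear operator is well defined. More delicately, interchanging the expansion---equivalently, the difference quotient defining the derivative---with the principal-value integral requires uniform control of the cubic remainder of the sine, and it is here that the $\mathcal{H}^{11/6}$ regularity and the Hilbert-transform-type mapping properties of the kernel enter to guarantee that \eqref{eq:3.81} indeed maps into $\mathcal{L}_2(\Omega)$ with the approximation error controlled in the $\mathcal{L}_2$ norm. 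For these boundedness and differentiability estimates I would rely on the detailed analysis in \cite{HuSha18_1} rather than reproduce it.
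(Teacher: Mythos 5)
Your proposal is correct and follows essentially the same route as the paper: the paper states this proposition without a separate proof, since the formula in \eqref{eq:3.81} is introduced as a definition (note the ``$:=$'') justified by the immediately preceding discussion, namely the replacement $\sin[\Phi(x',y)-\Phi(x,y)]\approx\Phi(x',y)-\Phi(x,y)$ valid in closed loop and, equivalently, the Fr\'{e}chet derivative linearization deferred to \cite{HuSha18_1}. Your additional remarks on principal-value integrability and the $\mathcal{H}^{11/6}$ regularity are sound elaborations of what the paper likewise delegates to the cited reference.
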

 
 %%%%%%%%%%%%%%%%%%%%%%%%%%%%%%%%%%%%%%%%%%%%%%%%%%%%%%%%%%%%%%%%%%%%%%%%%%%%%%%%%%%%%%%%%%%%%%%%%%%%%%
\section{Non-linear Landweber iteration for solving the wavefront reconstruction problem} \label{chap_landweber} 
 %%%%%%%%%%%%%%%%%%%%%%%%%%%%%%%%%%%%%%%%%%%%%%%%%%%%%%%%%%%%%%%%%%%%%%%%%%%%%%%%%%%%%%%%%%%%%%%%%%%%%%
 
Almost all existing wavefront reconstruction approaches assume a linear relation between the unknown wavefront or mirror actuator commands and the given pyramid wavefront sensor measurements \cite{ShatHut_spie2018_overview}. Such reconstruction methods for pyramid sensors are, for instance, several variations of MVM (matrix-vector multiplication) approaches summarized in \cite{HuShaOb18},
%\cite{Bechet_2009,Ellerbroek02, Fried_1977,ThiTa10,VoYa06}
 Fourier based methods \cite{QuPa10,Shat17_ao4elt5_clif,Shat17}, approaches based on the inversion of the finite Hilbert transform \cite{Hut17,Shatokhina_PhDThesis}, the fast P-CuReD algorithm \cite{Ros11,Shat13} or rapidly converging iterative reconstruction methods \cite{HuSha18_2,HuSha18_1} which were just recently developed. In \cite{HuSha18_2}, the authors already applied Landweber iteration for wavefront reconstruction from pyramid sensor data but based on the linearization of the model. The results contained therein will serve as comparison to a potential improvement using non-linear methods as discussed in one of the subsequent Sections. Additionally, in \cite{HuSha18_2} the authors reconstruct by only using the one-term assumption of the roof sensor model, i.e., they exclude the second term in Eq.~\eqref{eq:3.81} and consider variations of the finite Hilbert transform in their numerical implementations. In our approaches, we use the full roof sensor model, and hence may achieve additional gains in reconstruction performance.\bigskip

In this paper we propose two new non-linear reconstruction algorithms for pyramid sensor measurements. As non-linear reconstruction procedure, we use either Landweber or Landweber-Kaczmarz iteration. For these choices, we chiefly benefit from the stability and flexibility of the algorithms with regard to the optimization of numerical parameters. %such as loop gain or relaxation coefficient. 
For an appropriate choice of the relaxation parameter, we even observe an accelerated convergence of the method. \bigskip

We first review the general scheme of the Landweber algorithm originally proposed in the 1950s by Louis Landweber \cite{Landweber51} for linear problems and emphasize that the theory for solving non-linear, ill-posed problems using Landweber iteration considered in this paper is mainly referred to \cite{Neub95,kaltenbacher2008iterative}.  \bigskip

The Landweber method is an iterative technique for minimizing the residual as the quadratic functional
\begin{equation}\label{eq:l.1}\notag
\left|\left|\boldsymbol P\Phi - s \right|\right|^2
\end{equation}
and is an appealing iterative alternative to Tikhonov regularization. 
The idea of Landweber iteration follows from the fixed point equation $\Phi_{k+1} = F\left(\Phi_k\right)$ with fixed point operator
\begin{equation}\label{eq:l.3}
\boldsymbol F(\Phi) := \Phi + \boldsymbol P'\left(\Phi\right)^*\left(s-\boldsymbol P\left(\Phi\right)\right)
\end{equation}
assuming that the pyramid sensor operator is differentiable. 
The iterative procedure is described by
\begin{equation}\label{eq:l.2}
\Phi_{k+1}^\delta = \Phi_k^\delta + \boldsymbol P'\left(\Phi_k^\delta\right)^*\left(s^\delta-\boldsymbol P\left(\Phi_k^\delta\right)\right) \qquad \qquad k=0,1,2,\dots 
\end{equation}
with perturbed data $s^\delta$ fulfilling $\left|\left|s-s^\delta \right|\right|<\delta$ and adjoint $\boldsymbol P'\left(\Phi\right)^*$ of the locally uniformly bounded Fr\'{e}chet derivative $\boldsymbol P'\left(\Phi\right)$ in $\Phi$. Even if $s^\delta$ does not belong to the range of $\boldsymbol P$, the Landweber iteration~\eqref{eq:l.2} is stable for a fixed number of iterations. The number of iterations acts as regularization in case of noisy data $s^\delta$. 

Since the pyramid sensor decouples into two directions, we will consider $x$- and $y$- direction as independent Landweber iterations and are interested in a reconstruction $\left[\Phi_x,\Phi_y\right]$ for every direction. The final reconstruction $\Phi$ is then calculated as the average of both directions. Due to the symmetry of the problem, both directions can be handled analogously, and hence are not examined individually here. 

The iterates start with an initial guess $\Phi_0$ which may include a priori knowledge of an exact solution $\Phi_*$.  We always assume $\Phi_0^\delta = \Phi_0$. In AO loops, reconstructions are needed for a number of time steps and the reconstruction of the previous time steps usually is a good choice for the initial guess of the current time step. Hence, for subsequent time steps, we consider a warm restart technique meaning that as initial guess at time step $t+1$ denoted by $\Phi_{0,t+1}$ we choose the reconstruction of the last time step $t$, i.e., $\Phi_{0,t+1} = \Phi_{rec,t}$. 

As regularization we use the discrepancy principle for regularization in case of noisy data: For an appropriate choice of $\tau > 0$ the iteration is stopped after $k_* = k_*(\delta,s^\delta)$ iteration steps fulfilling
\begin{equation}\label{eq:l.33}
 \left|\left|s^\delta - \boldsymbol P\left(\Phi_k^\delta\right)\right|\right| > \tau\delta, \qquad 0 \le k < k_* \qquad \qquad \text{and} \qquad \qquad \left|\left|s^\delta - \boldsymbol P\left(\Phi_{k_*}^\delta\right)\right|\right| \le \tau\delta.
\end{equation}
For non-linear inverse problems, iteration procedures as, e.g., \eqref{eq:l.2} will in general not converge globally to a solution of the non-linear operator equation~\eqref{eq:n.1}. The convergence theory is, in general, based on the assumption that the fixed point operator $\boldsymbol F$ is a nonexpansive operator, i.e.,$$\left| \left|\boldsymbol F\left(\Phi\right)-\boldsymbol F\left(\tilde{\Phi}\right)\right|\right| \le \left|\left|\Phi-\tilde{\Phi}\right|\right|, \qquad \qquad \Phi,\tilde{\Phi} \in \mathcal{D}\left(\boldsymbol F\right).$$ Iterative methods for approximating fixed points of nonexpansive operators have been considered, for instance, in \cite{Baku89, Baku2004, Brow67, Hal67}. As in many applications, it is difficult to verify analytically whether the fixed point operator $\boldsymbol F$ is nonexpansive for the pyramid sensor. For that reason, the nonexpansivity of the fixed point operator is often replaced by properties that guarantee at least the local convergence of the iteration method and are easier to check.
To obtain local convergence we assume the pyramid sensor operator equation~\eqref{eq:n.1} to be scaled according to
\begin{equation}\label{eq:l.4}
\left|\left|\boldsymbol P'\left(\Phi\right)\right|\right| \le 1, \qquad \quad \qquad \Phi \in \mathcal{B}_{2\rho}\left(\Phi_0\right)\subset \mathcal{D}\left(\boldsymbol P\right)
\end{equation}
for a ball $\mathcal{B}_{2\rho}\left(\Phi_0\right)$ of radius $2\rho$ around the initial guess $\Phi_0$.

The second condition needed to ensure local convergence in $\mathcal{B}_\rho\left(\Phi_0\right)$ to a solution of Eq.~\eqref{eq:n.1} reads as
\begin{equation}\label{eq:l.5}
\left|\left|\boldsymbol P\left(\Phi\right)-\boldsymbol P\left(\tilde{\Phi}\right)-\boldsymbol P'\left(\Phi\right)\left(\Phi-\tilde{\Phi}\right) \right|\right| \le \eta \left|\left|\boldsymbol P\left(\Phi\right)-\boldsymbol P\left(\tilde{\Phi}\right)\right|\right|,  
\end{equation}
for $\eta < \dfrac{1}{2}$, $\Phi,\tilde{\Phi} \in \mathcal{B}_{2\rho}\left(\Phi_0\right) \subset \mathcal{D}\left(\boldsymbol P\right)$. Both conditions guarantee that the Landweber iteration is well-defined as all iterates $\Phi_k^\delta, 0 \le k \le k_*$ remain elements of $\mathcal{D}\left(\boldsymbol P\right)$ if we employ the discrepancy principle. \bigskip

Under the above mentioned conditions~\eqref{eq:l.4}~-~\eqref{eq:l.5} and the discrepancy principle~\eqref{eq:l.33} we obtain the following convergence results.

\begin{theorem}[\cite{kaltenbacher2008iterative}, Theorem 2.4 \& Theorem 2.6]
If we assume that Eq.~\eqref{eq:n.1} is solvable in $\mathcal{B}_\rho\left(\Phi_0\right)$ and that the conditions~\eqref{eq:l.4}~-~\eqref{eq:l.5} hold, the non-linear Landweber iteration converges to a solution of $\boldsymbol P \Phi=s$ in case of exact data $s$. 

If $k_*\left(\delta,s^\delta\right)$ defines a stopping index according to the discrepancy principle~\eqref{eq:l.33} for
\begin{equation}\label{eq:l.55}
\tau > 2\dfrac{1+\eta}{1-2\eta} > 2
\end{equation}
with $\eta$ as in~\eqref{eq:l.5}, the Landweber iterates $x_{k_*}^\delta$ converge to a solution of Eq.~\eqref{eq:n.1}.
 
If $\mathcal{N}\left(\boldsymbol P'\left(\Phi^\dagger\right)\right) \subset \mathcal{N}\left(\boldsymbol P'\left(\Phi\right)\right)$ for all $\Phi \in \mathcal{B}_\rho\left(\Phi^\dagger\right)$, we obtain convergence of $\Phi_k$ and, respectively, of $\Phi_{k_*}^\delta$ to the $\Phi_0$-minimum-norm solution $\Phi^\dagger$ as $k \rightarrow \infty$ and $\delta \rightarrow 0$. 
\end{theorem}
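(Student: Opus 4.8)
The plan is to prove the three assertions — convergence for exact data, finiteness and convergence under the discrepancy-principle stopping rule, and identification of the limit as the $\Phi_0$-minimum-norm solution — by the standard monotonicity-plus-Cauchy argument for nonlinear Landweber iteration, using only the scaling~\eqref{eq:l.4} and the tangential cone condition~\eqref{eq:l.5}. First, for exact data I would fix a solution $\Phi_*\in\mathcal{B}_\rho(\Phi_0)$ and track the error $e_k:=\Phi_*-\Phi_k$. Expanding $\|e_{k+1}\|^2-\|e_k\|^2$ via the update~\eqref{eq:l.2}, inserting $\boldsymbol P'(\Phi_k)(\Phi_k-\Phi_*)=(\boldsymbol P(\Phi_k)-s)+r_k$ with $\|r_k\|\le\eta\|\boldsymbol P(\Phi_k)-s\|$ from~\eqref{eq:l.5}, and using $\|\boldsymbol P'(\Phi_k)^*\|\le 1$ from~\eqref{eq:l.4}, I expect the monotonicity estimate $\|e_{k+1}\|^2-\|e_k\|^2\le-(1-2\eta)\|s-\boldsymbol P(\Phi_k)\|^2\le 0$. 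Since $\eta<\tfrac12$ this is the crucial sign: it simultaneously shows that the iterates remain in $\mathcal{B}_\rho(\Phi_0)\subset\mathcal{B}_{2\rho}(\Phi_0)$, so the iteration stays well-defined, and, after summation, that $\sum_k\|s-\boldsymbol P(\Phi_k)\|^2<\infty$, hence the residuals vanish.

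The second step is to upgrade convergence of $\|e_k\|$ (which is monotone, hence convergent) to convergence of the sequence $(\Phi_k)$ itself. For indices $j\le l$ I would choose an intermediate $m$ minimizing the residual over $[j,l]$, split $\Phi_l-\Phi_j$ through $\Phi_m$, and control each telescoping term $\langle\Phi_{i+1}-\Phi_i,\,\Phi_m-\Phi_*\rangle=\langle s-\boldsymbol P(\Phi_i),\,\boldsymbol P'(\Phi_i)(\Phi_m-\Phi_*)\rangle$ again by~\eqref{eq:l.5}. I expect this to be the main obstacle: turning the summable-residual bound into a genuine Cauchy estimate is the delicate part, because the cross terms can only be absorbed by exploiting the minimizing choice of $m$ together with the tangential cone inequality, and the bookkeeping is easy to get wrong.

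Third, for noisy data I would repeat the error expansion with $s$ replaced by $s^\delta$, now carrying the extra contribution from $\|s-s^\delta\|\le\delta$. Setting $y_k^\delta:=s^\delta-\boldsymbol P(\Phi_k^\delta)$, I expect to arrive at $\|e_{k+1}^\delta\|^2-\|e_k^\delta\|^2\le-\|y_k^\delta\|\big[(1-2\eta)\|y_k^\delta\|-2(1+\eta)\delta\big]$, which is strictly negative exactly while the discrepancy principle~\eqref{eq:l.33} has not yet triggered, i.e.\ while $\|y_k^\delta\|>\tau\delta$ with $\tau>2(1+\eta)/(1-2\eta)$ as in~\eqref{eq:l.55}. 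Summing the decreasing error then bounds the stopping index by $O(\delta^{-2})$, proving $k_*<\infty$. To pass to the limit $\delta\to 0$ I would combine the continuity of each fixed iterate $\Phi_k^\delta$ in the data with the exact-data convergence of the first step, via a standard subsequence (semiconvergence) argument, to obtain $\Phi_{k_*}^\delta\to\Phi_*$.

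Finally, under the additional condition $\mathcal{N}(\boldsymbol P'(\Phi^\dagger))\subset\mathcal{N}(\boldsymbol P'(\Phi))$ for all $\Phi\in\mathcal{B}_\rho(\Phi^\dagger)$, I would observe that each increment $\Phi_{k+1}-\Phi_k=\boldsymbol P'(\Phi_k)^*(s-\boldsymbol P(\Phi_k))$ lies in $\overline{\mathrm{range}}(\boldsymbol P'(\Phi_k)^*)=\mathcal{N}(\boldsymbol P'(\Phi_k))^\perp\subset\mathcal{N}(\boldsymbol P'(\Phi^\dagger))^\perp$. Hence $\Phi_k-\Phi_0\perp\mathcal{N}(\boldsymbol P'(\Phi^\dagger))$ for every $k$, and passing to the limit shows the limit's deviation from $\Phi_0$ is orthogonal to $\mathcal{N}(\boldsymbol P'(\Phi^\dagger))$. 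Since any two solutions in $\mathcal{B}_\rho(\Phi^\dagger)$ differ, to leading order, by an element of this nullspace, this orthogonality pins the limit down as the unique $\Phi_0$-minimum-norm solution $\Phi^\dagger$, completing the argument.
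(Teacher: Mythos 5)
The paper offers no proof of this theorem: it is imported verbatim from Kaltenbacher--Neubauer--Scherzer (Theorems 2.4 and 2.6 of the cited monograph), so the ``paper's own proof'' is just the citation, and your sketch faithfully reproduces the standard argument given in that source --- the monotonicity estimate $\|e_{k+1}\|^2-\|e_k\|^2\le-(1-2\eta)\|s-\boldsymbol P(\Phi_k)\|^2$ from the tangential cone condition, the Cauchy argument through a residual-minimizing intermediate index, the discrepancy-principle threshold $2(1+\eta)/(1-2\eta)$, and the nullspace-orthogonality identification of the $\Phi_0$-minimum-norm limit. The only small slip is that monotonicity keeps the iterates in $\mathcal{B}_\rho\left(\Phi_*\right)\subset\mathcal{B}_{2\rho}\left(\Phi_0\right)$ rather than in $\mathcal{B}_\rho\left(\Phi_0\right)$ as you wrote, but that is exactly what is needed for \eqref{eq:l.4}--\eqref{eq:l.5} to remain applicable along the iteration, so the argument stands.
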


In order to ensure condition~\eqref{eq:l.4} we introduce a relaxation parameter $\omega$, which is chosen such that $$\omega\left|\left|\boldsymbol P'\left(\Phi^\dagger\right)\right|\right|^2 \le 1,$$ and consider the iteration scheme
\begin{equation}\label{eq:l.6}
\Phi_{k+1}^\delta = \Phi_k^\delta + \omega \boldsymbol P'\left(\Phi_k^\delta\right)^*\left(s^\delta-\boldsymbol P\left(\Phi_k^\delta\right)\right) \qquad \qquad k=0,1,2,\dots 
\end{equation}
as a modification of Eq.~\eqref{eq:l.2}.

\subsection{Non-linear Landweber algorithm applied to pyramid sensors}

For the first algorithm, named \textit{LIPS (Landweber Iteration for Pyramid Sensors)}, we apply Landweber iteration in $x$-direction and a second one independently in $y$-direction. Thus, we obtain two reconstructions $\Phi=\left[\Phi_x,\Phi_y\right]$ of the incoming phase and average them at the end. As we observed in numerical simulations, the reconstruction quality exhibits most improvements in the first few Landweber iterations and little in subsequent iterations, we fix the number of iterations $K$ in advance instead of using the discrepancy principle~\eqref{eq:l.33} in order to reduce the computational load of the method. \bigskip

The non-linear Landweber iteration modified for wavefront reconstruction from pyramid sensor data using a warm restart of the system reads as: 
\begin{table}[h]
\renewcommand{\arraystretch}{1.2}
\begin{tabular}{lcr}
\hline
\textbf{Algorithm 1} non-linear Landweber Iteration for Pyramid Sensors (LIPS)  \\
\hline
choose initial guess $\Phi_{K,0}$, set relaxation parameter $\omega$ \\
for $t=1,\dots T$ do \\
\quad $\Phi_{0,t} = \Phi_{K,t-1}$ \\
\quad for $i=1,\dots K$ do \\
\quad \quad $ \Phi_{i,t} = \Phi_{i-1,t} + \omega \boldsymbol P'\left(\Phi_{i-1,t}\right)^*\left(s_t-\boldsymbol P\left(\Phi_{i-1,t}\right)\right)$\\ 
\quad endfor \\
endfor\\
$\Phi_{K,T} =  \left(\Phi_{x,K,T}+\Phi_{y,K,T}\right)/2$ \\
\hline
\end{tabular} 
\end{table}

\subsection{Non-linear Landweber-Kaczmarz algorithm applied to pyramid sensors}

The structure of the operator equation~\eqref{eq:n.1} consisting of two equations (one in $x$- and one in $y$-direction) leads us to further consider applying Landweber-Kaczmarz iteration instead of simply averaging the two obtained reconstructions. In contrast to the two reconstructions $\left[\Phi_x,\Phi_y\right]$, one for each direction, we are now only interested in one reconstruction for both directions, again denoted by $\Phi$. Advantages of Kaczmarz strategies for wavefront reconstruction using pyramid sensors have already been analyzed in~\cite{HuSha18_2}.

The principal idea of Kaczmarz's method \cite{Bank82,McCormick77,McCormick1975,Meyn1983,Natterer2001} can be used in combination with any iterative procedure. In image reconstruction, the method, which is also known as algebraic reconstruction technique (ART), was used in \cite{Herman80} for the first time. \bigskip

We apply the Kaczmarz-type method in combination with Landweber iteration to the problem of wavefront reconstruction from pyramid sensor data and name the new method \textit{KLIPS (Kaczmarz-Landweber Iteration for Pyramid Sensors)}. \bigskip

The non-linear Landweber-Kaczmarz iteration modified for wavefront reconstruction from pyramid sensor data using a warm restart of the system is described by: 

\begin{table}[h]
\renewcommand{\arraystretch}{1.2}
\begin{tabular}{lcr}
\hline
\textbf{Algorithm 2} non-linear Kaczmarz-Landweber Iteration for Pyramid Sensors (KLIPS)  \\
\hline
choose initial guess $\Phi_{K,0}$, set relaxation parameter $\omega_x$, $\omega_y$ \\
for $t=1,\dots T$ do \\
\quad $\Phi_{0,t} = \Phi_{K,t-1}$ \\
\quad for $i=1,\dots K$ do \\
\quad \quad $ \Phi_{i,t,1} = \Phi_{i-1,t} + \omega_{x} \boldsymbol P_x'\left(\Phi_{i-1,t}\right)^*\left(s_{x,t}-\boldsymbol P_x\left(\Phi_{i-1,t}\right)\right)$\\ 
\quad \quad $ \Phi_{i,t,2} = \Phi_{i,t,1} + \omega_{y} \boldsymbol P_y'\left(\Phi_{i,t,1}\right)^*\left(s_{y,t}-\boldsymbol P_y\left(\Phi_{i,t,1}\right)\right)$\\ 
\quad \quad $ \Phi_{i,t} = \Phi_{i,t,2}$ \\
\quad endfor \\
endfor\\
\hline
\end{tabular}
\end{table}

Instead of averaging the reconstructions in the last step as in the previously described Landweber iteration for pyramid sensors, we now apply the Landweber iteration steps cyclically. For convergence and stability proofs of this method \cite{Cez08,Kow02}, one basically has to enforce the same conditions on every involved operator as for Landweber iteration. Since for the LIPS we have already considered the two directions as completely independent Landweber iterations, no further conditions necessary for the convergence and stability of the KLIPS have to be shown. As for the LIPS, an appealing alternative to the discrepancy principle in AO loops for pyramid sensors is to fix the number of iterates in advance in order to avoid time-consuming computations which do not deliver high quality improvements. In case we want to use the discrepancy principle we modify~\eqref{eq:l.6} by
\begin{equation}\label{eq:l.7}
\Phi_{k+1}^\delta = \Phi_k^\delta + \sigma_{j,k} \omega_j \boldsymbol P_j'\left(\Phi_k^\delta\right)^*\left(s_j^\delta-\boldsymbol P_j\left(\Phi_k^\delta\right)\right) \qquad \qquad k=0,1,2,\dots 
\end{equation}
and use
\begin{align}\label{l.8}
\sigma_{j,k} := \begin{cases} 1, &\mathrm{if} \ \tau\delta < \left|\left|s_j^\delta - \boldsymbol P_j\left(\Phi_k^\delta\right) \right|\right|, \\
0, \qquad \qquad &\mathrm{else}, \end{cases}
\end{align}
for $j=x,y$ indicating the direction, $k$ denoting the iteration step and $\tau$ chosen according to~\eqref{eq:l.55}. The iteration procedure is stopped in case of a stagnation over one full cycle of iterates. 

Note that both given algorithms have the warm restart technique incorporated but can also be used without a warm restart.

%%%%%%%%%%%%%%%%%%%%%%%%%%%%%%%%%%%%%%%%%%%%%%%%%%%%%%%%%%%%%%%%%%%%%%%%%%%%%%%%%%%%%%%%%%%%%%%%%%%%%%%%%%
\section{Fr\'{e}chet derivatives and corresponding adjoints of the roof sensor operators} \label{chap_frechet}
%%%%%%%%%%%%%%%%%%%%%%%%%%%%%%%%%%%%%%%%%%%%%%%%%%%%%%%%%%%%%%%%%%%%%%%%%%%%%%%%%%%%%%%%%%%%%%%%%%%%%%%%%%
 
For the application of the above proposed algorithms LIPS and KLIPS we need to calculate the Fr\'{e}chet derivatives and the corresponding adjoint operators. At this point, we can either use the full pyramid sensor model or the roof sensor as a simplification. Here, we consider the latter, i.e., we plug in the operators $\boldsymbol R^{\{n,c\}}$ introduced in Eq.~\eqref{eq:3.6a}~-~\eqref{eq:3.6b} instead of the operator $\boldsymbol P$ in Eq.~\eqref{eq:l.2} for the reconstruction, but still use the full pyramid wavefront sensor data $s$. The roof sensor models have already been taken as basis for several developed algorithms most of which are adapted from a linearization of the roof sensor operators. There even exist reconstruction methods that reduce the full pyramid sensor model to a one-term assumption meaning that the pyramid sensor model is approximated as the finite Hilbert transform of the incoming phase \cite{ShatHut_spie2018_overview}. The high reconstruction performance delivered by these algorithms (even though they are based on simplifications) is confirmed in, e.g., \cite{Hut17,HuSha18_2,ShatHut_spie2018_overview,Shatokhina_PhDThesis,Shat17,Shat13}. 

\begin{prop}
The Fr$\acute{e}$chet derivatives $\left(\boldsymbol R^{\{n,c\}}\right)'\left(\Phi\right) \in \mathcal{L}\left(\mathcal{H}^{11/6}, \mathcal{L}_2\right)$ of the roof sensor operators $\boldsymbol R^{\{n,c\}}$ at $\Phi \in \mathcal{D}\left(\boldsymbol R^{\{n,c\}}\right)$ are given by 
{\small\begin{equation*}
\left(\left(\boldsymbol R_x^{\{n,c\}}\right)'(\Phi)\ \psi\right)\left(x,y\right)
=\mathcal{X}_{\Omega}\left(x,y\right)\left[\dfrac{1}{\pi} \int_{\Omega_y}{\dfrac{\cos{\left[\Phi(x',y)-\Phi(x,y)\right]}\times k^{\{n,c\}}\left(x'-x\right)\left[\psi(x',y)-\psi(x,y)\right]}{x'-x}\ \mathrm{d}x'}\right]
\end{equation*}
}and
{\small\begin{equation*}
\left(\left(\boldsymbol R_y^{\{n,c\}}\right)'(\Phi)\ \psi\right)\left(x,y\right)
=\mathcal{X}_{\Omega}\left(x,y\right)\left[\dfrac{1}{\pi}  \int_{\Omega_y}{\dfrac{\cos{\left[\Phi(x,y')-\Phi(x,y)\right]}\times k^{\{n,c\}}\left(y'-y\right)\left[\psi(x,y')-\psi(x,y)\right]}{y'-y}\ \mathrm{d}y'}\right].
\end{equation*}}
\end{prop}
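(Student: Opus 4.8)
The plan is to compute the Fréchet derivative of the roof sensor operator $\boldsymbol R_x^{\{n,c\}}$ directly from its definition in Eq.~\eqref{eq:3.6a} by differentiating under the integral sign. Since the operator acts by integrating a function of $\Phi$, and the only appearance of $\Phi$ is inside $\sin[\Phi(x',y)-\Phi(x,y)]$, I would first argue that $\boldsymbol R_x^{\{n,c\}}$ is Gâteaux differentiable: fix $\Phi$ and a direction $\psi$, form $\boldsymbol R_x^{\{n,c\}}(\Phi+t\psi)$, and differentiate the integrand with respect to $t$ at $t=0$. The chain rule applied to $\sin[(\Phi+t\psi)(x',y)-(\Phi+t\psi)(x,y)]\times k^{\{n,c\}}(x'-x)$ yields $\cos[\Phi(x',y)-\Phi(x,y)]\times k^{\{n,c\}}(x'-x)\times[\psi(x',y)-\psi(x,y)]$, which reproduces the claimed kernel after dividing by $x'-x$ and multiplying by $\tfrac{1}{\pi}\mathcal{X}_\Omega(x,y)$. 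The computation for $\boldsymbol R_y^{\{n,c\}}$ is identical by the symmetry of the two directions already noted in the paper.

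The analytic content is therefore not the formal differentiation but the justification that the Gâteaux derivative is in fact a Fréchet derivative, i.e.\ that the linear map $\psi \mapsto (\boldsymbol R_x^{\{n,c\}})'(\Phi)\,\psi$ is bounded from $\mathcal{H}^{11/6}$ into $\mathcal{L}_2$ and that the remainder is $o(\norma{\psi}_{\mathcal{H}^{11/6}})$ as $\psi\to 0$. For boundedness I would observe that the derivative kernel has the same structure as the linearized operator $\boldsymbol R_x^{\{n,c\},lin}$ of Proposition~\ref{3.8b}, only with the extra bounded factor $\cos[\Phi(x',y)-\Phi(x,y)]$ (bounded in modulus by $1$) and with $\psi$ in place of $\Phi$; hence the mapping properties claimed earlier for the linear roof operators carry over, giving $(\boldsymbol R_x^{\{n,c\}})'(\Phi)\in\mathcal{L}(\mathcal{H}^{11/6},\mathcal{L}_2)$. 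This is where the $\mathcal{H}^{11/6}$ regularity is essential, since the kernel $1/(x'-x)$ is a finite Hilbert transform and controlling it in $\mathcal{L}_2$ requires the Sobolev smoothness rather than mere $\mathcal{L}_2$ membership of $\psi$.

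The main obstacle I anticipate is the Fréchet remainder estimate, which must handle the singular integral uniformly in the perturbation. Writing the difference quotient remainder, one expands $\sin[(\Phi+\psi)(x',y)-(\Phi+\psi)(x,y)]-\sin[\Phi(x',y)-\Phi(x,y)]-\cos[\Phi(x',y)-\Phi(x,y)][\psi(x',y)-\psi(x,y)]$ and uses the second-order Taylor estimate for $\sin$, so that the remainder integrand is $O\big([\psi(x',y)-\psi(x,y)]^2\big)$ against the kernel $k^{\{n,c\}}(x'-x)/(x'-x)$. The difficulty is that squaring the increment of $\psi$ and then controlling the resulting singular integral in $\mathcal{L}_2(\Omega)$ is more delicate than in the linear case; I would bound it by passing the increment $[\psi(x',y)-\psi(x,y)]/(x'-x)$ through a finite-Hilbert-transform mapping estimate and absorbing the second factor via the Sobolev embedding $\mathcal{H}^{11/6}\hookrightarrow$ (Hölder continuity), which makes the remainder genuinely quadratic and hence $o(\norma{\psi})$. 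Once these mapping and remainder estimates are in place the identification of the displayed expressions as the Fréchet derivatives follows immediately.
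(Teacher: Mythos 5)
Your formal computation is correct and your identification of where the real work lies (boundedness of the derivative and the Fr\'{e}chet remainder estimate) is accurate. Note, however, that the paper does not actually prove this proposition: its ``proof'' is a citation to Theorem~5 of \cite{HuSha18_1}, so there is no in-paper argument to compare against. Your route -- G\^{a}teaux differentiation under the integral followed by a quantitative remainder bound -- is the standard one and is presumably what the cited reference carries out. Two remarks on your sketch. First, for the remainder you do not need to route one factor of the increment through a finite-Hilbert-transform mapping theorem: the Sobolev embedding $\mathcal{H}^{11/6}(\mathbb{R}^2)\hookrightarrow C^{0,5/6}$ already gives $\left|\psi(x',y)-\psi(x,y)\right|^2/\left|x'-x\right|\le C\norma{\psi}_{\mathcal{H}^{11/6}}^2\left|x'-x\right|^{2/3}$, an integrable kernel on the bounded chord $\Omega_y$, so the remainder is pointwise $O(\norma{\psi}^2)$ uniformly on $\Omega$ and hence $O(\norma{\psi}^2)$ in $\mathcal{L}_2(\Omega)$ by boundedness of the aperture -- genuinely quadratic, as required. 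The same embedding gives boundedness of the derivative itself, since $\cos[\cdot]$ and $k^{\{n,c\}}$ are bounded and $\left|\psi(x',y)-\psi(x,y)\right|/\left|x'-x\right|\le C\norma{\psi}\left|x'-x\right|^{-1/6}$ is integrable; appealing to the mapping properties of the linearized operator of Proposition~\ref{3.8b} is an acceptable alternative but is itself only asserted, not proved, in this paper. Second, to conclude Fr\'{e}chet (rather than merely G\^{a}teaux) differentiability you should phrase the estimate for the full increment $\boldsymbol R_x(\Phi+\psi)-\boldsymbol R_x(\Phi)-(\boldsymbol R_x)'(\Phi)\psi$ directly, as you do in your last paragraph, rather than via the $t$-derivative; with that phrasing your argument closes.
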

\begin{proof}
The Fr\'{e}chet derivatives of the operators representing the roof wavefront sensor were already derived in~\cite[Theorem~5]{HuSha18_1}. 
\end{proof}

The calculation of the adjoint operators depend on the used spaces and the accordant inner product. As we consider $\left( \boldsymbol R^{\{n,c\}}_x\right)'\left(\Phi\right):\mathcal{H}^{11/6}\rightarrow \mathcal{L}_2$, the corresponding adjoint maps from $\mathcal{L}_2$ into $\mathcal{H}^{11/6}$. 
In order to calculate the adjoints for the operators defined from $\mathcal{H}^{11/6}$ into $\mathcal{L}_2$ we consider the embedding operator introduced as 
\begin{equation*}
i_s: \mathcal{H}^{11/6} \rightarrow \mathcal{L}_2
\end{equation*}
and derive the corresponding adjoint operators $\left(\left( \boldsymbol R^{\{n,c\}}_x\right)'\left(\Phi\right)\right)^*:\mathcal{L}_2\rightarrow \mathcal{H}^{11/6}$ in $\Phi$ according to \cite{RaTesch04} by $$\left(\left( \boldsymbol R^{\{n,c\}}_x\right)'\left(\Phi\right)\right)^* = i_s^*\left(\left( \tilde{\boldsymbol R}^{\{n,c\}}_x\right)'\left(\Phi\right)\right)^*$$ for $\left(\left( \tilde{\boldsymbol R}^{\{n,c\}}_x\right)'\left(\Phi\right)\right)^*: \mathcal{L}_2 \rightarrow \mathcal{L}_2$. Hence, it is sufficient to derive the adjoint operators $\left(\left(\tilde{ \boldsymbol R}^{\{n,c\}}_x\right)'\left(\Phi\right)\right)^*$ with respect to the inner product in $\mathcal{L}_2\left(\mathbb{R}^2\right)$. For simplicity of notation, we use $\left(\left( \boldsymbol R^{\{n,c\}}_x\right)'\left(\Phi\right)\right)^*$ for $\left(\left( \tilde{\boldsymbol R}^{\{n,c\}}_x\right)'\left(\Phi\right)\right)^*$ in the following. Details on the implementation of the above considerations can be found in \cite{RaTesch04}.

\begin{prop}\label{roof_frechet_adjoint}
The adjoint operators $\left(\left(\boldsymbol R^{\{n,c\}}\right)'(\Phi)\right)^*:\mathcal{L}_2\left(\mathbb{R}^2\right)\rightarrow \mathcal{L}_2\left(\mathbb{R}^2\right)$ of the roof sensor's Fr\'{e}chet derivatives at $\Phi$ are represented by
{\footnotesize\begin{equation}\label{frech_def_a}
\left(\left(\boldsymbol R_x^{\{n,c\}}\right)'(\Phi)\right)^* \psi\left(x,y\right)
=-\mathcal{X}_{\Omega}\left(x,y\right)\dfrac{1}{\pi}  \int_{\Omega_y}{\dfrac{\cos{\left[\Phi(x',y)-\Phi(x,y)\right]}\times k^{\{n,c\}}\left(x'-x\right)\left[\psi(x',y)+\psi(x,y)\right]}{x'-x}\ \mathrm{d}x'}
\end{equation}
}and
{\footnotesize\begin{equation}\label{frech_def_b}
\left(\left(\boldsymbol R_y^{\{n,c\}}\right)'(\Phi)\right)^* \psi\left(x,y\right)
=-\mathcal{X}_{\Omega}\left(x,y\right)\dfrac{1}{\pi}   \int_{\Omega_x}{\dfrac{\cos{\left[\Phi(x,y')-\Phi(x,y)\right]}\times k^{\{n,c\}}\left(y'-y\right)\left[\psi(x,y')+\psi(x,y)\right]}{y'-y}\ \mathrm{d}y'}.
\end{equation}}
\end{prop}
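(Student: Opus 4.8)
The plan is to verify~\eqref{frech_def_a} directly from the definition of the adjoint with respect to the $\mathcal{L}_2(\mathbb{R}^2)$ inner product, that is, to check
$$\langle (\boldsymbol R_x^{\{n,c\}})'(\Phi)\,u,\ \psi\rangle = \langle u,\ ((\boldsymbol R_x^{\{n,c\}})'(\Phi))^*\,\psi\rangle$$
for the operator proposed in~\eqref{frech_def_a} and for all $u,\psi$ ranging over a dense class of sufficiently regular, compactly supported functions (say Lipschitz with support in $\Omega$); the identity then extends to all of $\mathcal{L}_2$ by density together with the $\mathcal{L}_2$-boundedness of the (finite Hilbert transform type) operators involved. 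Because the $y$-component operator is constructed in exactly the same way, it suffices to treat the $x$-direction and transfer the argument verbatim to obtain~\eqref{frech_def_b}.

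First I would write the left-hand pairing as a triple integral over $y$ and over $x,x'$, both ranging in the chord $\Omega_y$ for fixed $y$, inserting the expression for $(\boldsymbol R_x^{\{n,c\}})'(\Phi)u$ from the preceding Proposition. The decisive structural observation is that the kernel splits into even and odd parts under the interchange $x\leftrightarrow x'$: setting $c(x,x',y):=\cos[\Phi(x',y)-\Phi(x,y)]$ one has $c(x,x',y)=c(x',x,y)$ since cosine is even, the weight is even because $k^n\equiv 1$ and $k^c(\xi)=J_0(\alpha_\lambda \xi)$ with $J_0$ even, while the factor $1/(x'-x)$ is odd. I would then separate the difference $u(x',y)-u(x,y)$ into its two summands and, in the summand carrying $u(x',y)$, relabel $x\leftrightarrow x'$. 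Since $x$ and $x'$ run over the same interval $\Omega_y$, the domain is symmetric and the swap is legitimate: the even factors are unchanged, whereas $1/(x'-x)$ contributes a sign that simultaneously produces the overall minus sign in~\eqref{frech_def_a} and turns the subtraction between the two copies of $u$ into the sum $\psi(x',y)+\psi(x,y)$. Collecting both summands and reading off the coefficient of $u(x,y)$ gives exactly the claimed formula.

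The main obstacle is the singular integral. In the forward operator the difference $u(x',y)-u(x,y)$ cancels the $1/(x'-x)$ singularity at $x'=x$, so that integral converges absolutely for Lipschitz $u$; however, the two summands individually, and the adjoint itself with its sum $\psi(x',y)+\psi(x,y)$, are genuinely singular and must be read as Cauchy principal values. I would therefore perform the Fubini interchange and the variable swap on the truncated integrals over $\{|x'-x|>\varepsilon\}$, where every integrand is absolutely integrable and all steps are elementary, and only afterwards let $\varepsilon\to 0$. The excluded neighbourhood $\{|x'-x|>\varepsilon\}$ is itself invariant under $x\leftrightarrow x'$, so the swap respects the principal-value regularisation and no boundary term is lost; the limits exist because the odd singular factor is paired with the even factors $c$ and $k^{\{n,c\}}$. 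This justifies splitting the convergent forward integral into two principal-value integrals and recombining them into the adjoint.
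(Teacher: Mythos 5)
Your argument is correct and follows essentially the same route as the paper's own proof: splitting $u(x',y)-u(x,y)$ into its two summands is precisely the paper's decomposition of the derivative into $\boldsymbol T_1^{\{n,c\}}(\Phi)-\boldsymbol T_2^{\{n,c\}}(\Phi)$, and the decisive step in both is the relabelling $x\leftrightarrow x'$ over the symmetric domain together with the evenness of $\cos$ and $k^{\{n,c\}}$ against the odd factor $1/(x'-x)$, which produces the overall minus sign and turns the difference into the sum $\psi(x',y)+\psi(x,y)$. Your explicit $\varepsilon$-truncation to justify splitting the absolutely convergent forward integral into two genuinely singular principal-value pieces is a point the paper passes over in silence; it is a welcome refinement rather than a different approach.
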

\begin{proof}
The proof is given in the Appendix.
\end{proof}

%%%%%%%%%%%%%%%%%%%%%%%%%%%%%%%%%%%%%%%%%%%%%%%%%%%%%%%%%%%%%%%%%%%%%%%%%%%%%%%%%%%%%%%%%%%%%%%%%%%%%%%%%%
\section{Numerical implementation and complexity} \label{chap_complexity}
%%%%%%%%%%%%%%%%%%%%%%%%%%%%%%%%%%%%%%%%%%%%%%%%%%%%%%%%%%%%%%%%%%%%%%%%%%%%%%%%%%%%%%%%%%%%%%%%%%%%%%%%%%
Let us now summarize some numerical aspects as well as the computational complexities of the proposed methods. First, we choose an adequate representation of the incoming wavefront and the measurements for an $N\times N$ pyramid sensor.
We assume the two dimensional wavefront to be given as linear combination of characteristic functions $\mathcal{X}_{\Omega_{ij}}\left(x,y\right)$ of the subapertures $\left(\Omega_{ij}\right)_{i,j=1}^N$.
For the annular telescope aperture $\Omega = \Omega_y\times \Omega_x$, the disjoint areas $\Omega_{ij}$ are chosen such that
\begin{equation}\label{eq:im.3}
\Omega=\bigcup\limits_{i,j=1}^N \Omega_{ij} \qquad \qquad \text{and} \qquad \qquad \Omega_{ij} \cap \Omega_{mn} = \emptyset\qquad \text{for} \ i\neq m \land j\neq n
\end{equation}
as well as
$$\Omega_n = \bigcup\limits_{i=1}^N \Omega_{in},$$ i.e., the term $\Omega_n$ indicates the row of the symmetric telescope pupil $\Omega$ located at $y$-position $n$. In order to describe an annular telescope aperture instead of a squared one, we assign all areas $\Omega_{ij}$ being located outside the annular aperture to the empty set in~\eqref{eq:im.3} just for simplicity of notation. In reality, we have different numbers of subapertures in every row and column of the aperture. For the numerical implementation, we only consider those subapertures being located on the aperture to save computation time. The incoming phase is represented by
\begin{equation}\label{eq:im.1}
\Phi\left(x,y\right) = \sum_{i,j=1}^N \Phi_{ij}\mathcal{X}_{\Omega_{ij}}\left(x,y\right)
\end{equation}
with coefficients $\Phi_{ij} \in \mathbb{R}, 1 \le i,j\le N$. \\
The pyramid sensor delivers data on every subaperture -- more precisely, one measurement in $x$-direction and one measurement in $y$-direction for every subaperture. For these we choose the same representation as for the incoming phase, i.e.,
\begin{equation}\label{eq:im.2}
s\left(x,y\right)= \sum_{i,j=1}^N s_{ij}\mathcal{X}_{\Omega_{ij}}\left(x,y\right),
\end{equation}
where $s_{ij} \in \mathbb{R}, 1 \le i,j \le N$ denotes the coefficients for measurements $s = s_x$ or $s=s_y$ respectively. The points $\left(x_m,y_n\right)$ where data are assumed to be given can, for instance, be chosen as the middle points of every subaperture of the symmetric pupil, i.e.,
$$x_m = \dfrac{-D-d}{2}+m\times d, \qquad y_n=\dfrac{-D-d}{2}+n\times d \qquad \qquad \text{for}\  m,n=1,\dots,N.$$
Here, $d=D/N$ denotes the subaperture size.

\begin{prop}\label{prop_discretization}
Using the representations~\eqref{eq:im.1} for the incoming phase $\Phi$ and the direction $\psi$, the evaluation of the operators $\boldsymbol R^{\{n,c\}}$ and $\left(\left(\boldsymbol R^{\{n,c\}}\right) ' \left(\Phi\right)\right)^*$ defined according to Eq.~\eqref{eq:3.6a}~-~\eqref{eq:3.6b} and \eqref{frech_def_a}~-~\eqref{frech_def_b} in the middle point $\left(x_m,y_n\right)$ of the subaperture $\Omega_{mn}\subset \Omega$ for $1 \le m,n \le N$ can be represented by
\begin{equation*}
\left( \boldsymbol R_x^{\{n,c\}} \Phi\right)_{m,n} = \dfrac { 1 } { \pi }\ \sum_{\substack{i=1\\i \neq m}}^N \sin\left[\Phi_{in}-\Phi_{mn}\right] \alpha_{in}^{\{n,c\}}\left(x_m\right)
\end{equation*}
and
\begin{equation*}
\left(\left(\left(\boldsymbol R_x^{\{n,c\}}\right)'(\Phi)\right)^* \psi\right)_{mn} =-\dfrac{1}{\pi}\sum\limits_{i=1}^N \cos{\left[\Phi_{in}-\Phi_{mn}\right]}\left[\psi_{in}+\psi_{mn}\right] \alpha_{in}^{\{n,c\}}\left(x_m\right)
\end{equation*}
with
\begin{equation}\label{eq:im.4}
\begin{split}
\alpha_{in}^{\{n,c\}}\left(x_m\right):= \begin{cases} \ \int_{\Omega_{in}} \dfrac{  k^{\{n,c\}} (x'-x_m) } {x'-x_m} \ \mathrm{d}x',   &\mathrm{for} \ i\neq m, \\
\ p.v.\ \int_{\Omega_{mn}} \dfrac{  k^{\{n,c\}} (x'-x_m) } {x'-x_m} \ \mathrm{d}x', \qquad \qquad  &\mathrm{for} \ i=m. \end{cases}
\end{split}
\end{equation}
\end{prop}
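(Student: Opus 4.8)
The plan is to insert the piecewise-constant representations~\eqref{eq:im.1} of $\Phi$ and of the direction $\psi$ directly into the integral formulas~\eqref{eq:3.6a} for the roof operator and~\eqref{frech_def_a} for the adjoint Fr\'echet derivative, to evaluate both at the subaperture midpoint $(x_m,y_n)$, and then to collapse the integral over the chord $\Omega_y$ into a finite sum over subapertures. The only ingredients required are the definitions of $\boldsymbol R_x^{\{n,c\}}$ and of its adjoint together with the tiling~\eqref{eq:im.3}; no analytic estimate enters, so the whole argument is a bookkeeping computation with a single genuinely delicate point.

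First I would use that at the fixed height $y=y_n$ the chord $\Omega_y$ coincides with the row $\Omega_n=\bigcup_{i=1}^N\Omega_{in}$ and that $\mathcal{X}_\Omega(x_m,y_n)=1$ because $(x_m,y_n)\in\Omega_{mn}\subset\Omega$. Decomposing $\int_{\Omega_y}=\sum_{i=1}^N\int_{\Omega_{in}}$ and invoking the representation, on each piece $\Omega_{in}$ the values $\Phi(x',y_n)=\Phi_{in}$ and $\psi(x',y_n)=\psi_{in}$ are constant in the integration variable $x'$, while the evaluation point supplies $\Phi(x_m,y_n)=\Phi_{mn}$ and $\psi(x_m,y_n)=\psi_{mn}$. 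Consequently the factor $\sin[\Phi_{in}-\Phi_{mn}]$ (for the operator) and the factor $\cos[\Phi_{in}-\Phi_{mn}]\,[\psi_{in}+\psi_{mn}]$ (for the adjoint) pull out of the inner integral, and the remainder is exactly $\int_{\Omega_{in}}k^{\{n,c\}}(x'-x_m)/(x'-x_m)\,\mathrm{d}x'=\alpha_{in}^{\{n,c\}}(x_m)$. For $i\neq m$ the point $x_m$ lies in $\Omega_{mn}$ and not in $\Omega_{in}$, so this kernel is regular and the ordinary integral in the upper branch of~\eqref{eq:im.4} is meaningful.

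The hard part -- the step I would treat most carefully -- is the diagonal index $i=m$, where $k^{\{n,c\}}(x'-x_m)/(x'-x_m)$ is singular at $x'=x_m$. For the operator the associated prefactor is $\sin[\Phi_{mn}-\Phi_{mn}]=0$, and since $\Phi(\cdot,y_n)\equiv\Phi_{mn}$ throughout $\Omega_{mn}$ the integrand vanishes identically in a neighborhood of the singularity; hence the $i=m$ contribution is zero and the operator sum may legitimately be restricted to $i\neq m$. For the adjoint, by contrast, the prefactor is $\cos[\Phi_{mn}-\Phi_{mn}]\,[\psi_{mn}+\psi_{mn}]=2\psi_{mn}\neq0$ in general, so the $1/(x'-x_m)$ singularity survives and the diagonal integral must be read as the Cauchy principal value $p.v.\int_{\Omega_{mn}}$, which is precisely the lower branch of~\eqref{eq:im.4}. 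This dichotomy explains why the operator sum omits $i=m$ whereas the adjoint sum runs over all $i=1,\dots,N$, and collecting the surviving terms yields both claimed identities.

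Finally I would note that, since $x_m$ is the midpoint of the symmetric subaperture $\Omega_{mn}$ and both $k^n\equiv1$ and $k^c(x)=J_0(\alpha_\lambda x)$ are even, the diagonal weight $\alpha_{mm}^{\{n,c\}}(x_m)$ is the principal value of an integrand that is odd about $x_m$ and therefore vanishes; this is a convenient simplification but is not needed for the statement. The $y$-direction formulas for $\boldsymbol R_y^{\{n,c\}}$ and its adjoint follow verbatim from the symmetry between the two coordinate directions observed earlier, interchanging the roles of rows and columns.
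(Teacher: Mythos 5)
Your argument is correct and follows essentially the same route as the paper: substitute the piecewise-constant representations, split the row integral $\int_{\Omega_n}$ into $\sum_i\int_{\Omega_{in}}$, pull the constant $\sin[\Phi_{in}-\Phi_{mn}]$ and $\cos[\Phi_{in}-\Phi_{mn}][\psi_{in}+\psi_{mn}]$ factors out of each piece, and identify the remaining integrals as $\alpha_{in}^{\{n,c\}}(x_m)$, with the $i=m$ term dropping from the operator sum because its sine prefactor vanishes while surviving (as a principal value) in the adjoint sum. Your closing observation that $\alpha_{mn}^{\{n,c\}}(x_m)$ vanishes by oddness about the midpoint is a correct extra remark that the paper does not make explicit.
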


\begin{proof}
The proof is given in the Appendix.
\end{proof}

Please note that the principal value meaning only has to be utilized when computing the function values $\alpha_{mn}^{\{n,c\}}\left(x_m\right)$. 

The benefit of these representations is that we can precompute the functions $\alpha_{in}^{\{n,c\}}$ offline which significantly reduces the computational load of the algorithms. Almost exclusively sine and cosine evaluations as well as multiplications with the precomputations have to be performed online.

\subsection{Computational complexity}

For the evaluation of the numerical effort of the Landweber and Landweber-Kaczmarz iteration for pyramid sensors we only consider the complexity of the operations that have to be performed online. We exclude the pre-calculations $\alpha^{\{n,c\}}$ in Eq.~\eqref{eq:im.4} needed for the application of the operators $\boldsymbol R$ and $\left(\boldsymbol R'\right)^*$ from our considerations. As before, $N$ indicates the number of subapertures in one direction. Since we send deformable mirror actuator commands to the AO system, the number of active actuators $N_a \sim N^2$ denotes the number of unknowns to be found. Please note that the proposed algorithms provide the reconstructed wavefront. An additional step of transforming the reconstructed phase into mirror actuator commands must be applied at the end. The effort of this projection step is not considered in the following. 

An overview of the number of FLOPS for both algorithms is provided in Table~\ref{table:PKI_FLOPs}. The post loop step necessary in case of the LIPS consists of finding the average of the two resulting reconstructions, which requires one summation of two $N\times N$ matrices as well as one division by a scalar. Altogether, this step is summed up to $2 N^2$ operations. \bigskip

\begin{table}
\renewcommand{\arraystretch}{1.2}
%\begin{table}
\begin{center}
\caption{\textbf{Number of FLOPS to be performed online in the LIPS and KLIPS method.}}
\begin{small}
\begin{tabular}{ l| l   l  } 
\hline
& \textbf{operation} & \textbf{\# of FLOPS} \\
 \hline
loop & $\boldsymbol P_x \Phi_x$ & $2N^3$  \\
& $s_x-\boldsymbol P_x\Phi_x$ & $N^2$ \\
& $\boldsymbol P_x'\left(\Phi_x\right)^*\left(s_x-\boldsymbol P_x\Phi_x\right)$ & $2N^3+2N^2$ \\
&$\omega\boldsymbol P_x'\left(\Phi_x\right)^*\left(s_x-\boldsymbol P_x\Phi_x\right)$ & $N^2$ \\
&$\Phi_x+\omega\boldsymbol P_x'\left(\Phi\right)^*\left(s_x-\boldsymbol P_x\Phi_x\right)$ & $N^2$ \\
post loop step & $\Phi = \tfrac{1}{2}\left(\Phi_x+\Phi_y\right)$ & $2N^2$ \\
\hline
\end{tabular}  \\ The post loop step is only necessary for the LIPS and omitted for the KLIPS.\label{table:PKI_FLOPs}
\end{small}
\end{center}
\end{table}

Since for both algorithms we perform the mentioned operations twice (in $x$- and in $y$-direction), for $K$ iterations we end up with $$C_{LIPS}(N; K)=\left(8N^3+10N^2\right)\times K + 2N^2 \qquad  \text{FLOPS}$$ for the application of the Landweber iteration (Algorithm $1$) having the additional step of averaging and $$C_{KLIPS}(N; K)=\left(8N^3+10N^2\right)\times K \qquad  \text{FLOPS}$$ for the non-linear Landweber-Kaczmarz approach (Algorithm $2$). Thus, both algorithms have a computational effort of approximately $\mathcal{O}\left(N_a^{3/2}\right)$.

%%%%%%%%%%%%%%%%%%%%%%%%%%%%%%%%%%%%%%%%%%%%%%%%%%%%%%%%%%%%%%%%%%%%%%%%%%%%%%%%%%%%%%%%%%%%%%%%%%%%%%%%%%
\section{End-to-end simulation results}   \label{chap_numerics}
%%%%%%%%%%%%%%%%%%%%%%%%%%%%%%%%%%%%%%%%%%%%%%%%%%%%%%%%%%%%%%%%%%%%%%%%%%%%%%%%%%%%%%%%%%%%%%%%%%%%%%%%%%

 To analyze the performance quality of the presented non-linear algorithms we simulate an instance of the ELT. We evaluate the reconstruction quality in a closed loop setting and compare the results with those of selected linear algorithms for wavefront reconstruction.  \bigskip

The presented closed loop results correspond to a METIS-like case \cite{METIS_spie_2016}. METIS will be a first light instrument of the ELT having a SCAO (Single Conjugate Adaptive Optics) module incorporated. For SCAO, the main interest is in an optimal image quality at the center of the field of view. There, one natural guide star is located and used as reference source for correcting aberrations induced by atmospheric turbulence. The object of interest has to be located in a close vicinity of the guide star. This simultaneously implies a major drawback of the most simple AO module. Since the light of both guide star and observed object travels approximately parallel to the optical axis, the occurring atmospheric perturbations simply add up. Hence, for the system it is sufficient to only have one deformable mirror and one wavefront sensor both conjugated to the ground layer. No atmospheric tomography is needed for SCAO. It is not necessary to have a deeper knowledge about height or thickness of turbulent layers in the wavefront reconstruction process. Only the sum of all phase shifts along the optical axis is measured for a subsequent control of the deformable mirror.\bigskip

The performance of the proposed methods is demonstrated with the official end-to-end simulation tool of the European Southern Observatory called Octopus \cite{LVK06,LeLouarn_OCTOPUS_04}. As primary mirror diameter of the ELT we simulate $39$ meters of which only the inner $37$ meters are used for METIS. Roughly $30$~$\%$ of the light gathering device are obstructed by a secondary mirror. Octopus generates a von Karman realization of the atmosphere having $35$ frozen layers at heights between $30$~m and $26.5$~km. The Fried parameter is $r_0=15.7$~cm and the outer scale $L_0=25$~m. The data are simulated by Octopus using the built-in model of a pyramid wavefront sensor with or without modulation on $74\times 74$ subapertures at a sensing wavelength of $\lambda = 2.2$~$\mu$m (K-band). The pyramid sensor measurements are read out $500$ or $1000$ times per second. Due to photon and read-out noise, we obtain perturbed pyramid sensor measurements. The deformable mirror geometry corresponds to the currently in Octopus implemented M4 geometry that is planned for the ELT. \bigskip

The reconstruction quality is quantified in terms of the short-exposure (SE) and the long-exposure (LE) Strehl ratio. This quality metric commonly used in astronomical communities is defined as the ratio of the peak aberrated image intensity from a point source compared to the maximum attainable intensity for an ideal optical system limited only by diffraction over the telescope aperture  \cite{Kechnie2016,Roddier,RoWe96}. Thus, the maximum Strehl ratio is equal to $1$. The SE Strehl ratio $S$ can be estimated by the Mar\'{e}chal's approximation as $$S\left(\Phi\right) \approx exp\left[-\left(2\pi\times \sigma\left(\Phi\right)/\lambda_{science}\right)^2\right].$$ 
The term $\lambda_{science}$ denotes the observing wavelength and $\sigma\left(\Phi\right)$ the root-mean-square deviation of the wavefront $\Phi$. The average on-axis SE Strehl ratio over the whole observing time is related by the LE Strehl ratio. The observing wavelength for the results presented in the following corresponds to $\lambda_{science}=2.2$~$\mu$m. Table~\ref{n_table:1} provides an overview of the simulation parameters. The parameters are chosen such that they conform to those chosen in \cite{HuSha18_2} for reasons of a direct comparison of the in this paper proposed non-linear methods to linear Landweber and Landweber-Kaczmarz iteration introduced therein. That is as well the reason for the choice of the observing wavelength  which is, according to the METIS specifications, not in the science range but useful for our analysis purposes. \bigskip

%\subsection{Test case A: closed loop AO}

Numerical tests in closed loop AO are performed for a range of photon fluxes between $50$ and $10000$ photons per subapertures per frame. A simple integrator is used for the temporal control of the system. The gain is optimized manually and for the non-modulated sensor the same for all test cases, which underlines the stability of the algorithms with respect to parameter tuning. For the modulated sensor, the gain was adjusted in the simulations with $50$ and $100$ ph/subap/frame. We found that it is advantageous to have a frequency dependent loop gain correcting with main emphasis on low-order modes at the beginning of a closed loop simulation. \bigskip %larger loop gain for the first time steps, e.g., until time step $20$ and approximately a third of the original gain afterwards.  \bigskip

\begin{table}
\renewcommand{\arraystretch}{1.2}
%\begin{table}
\begin{center}
\caption{\textbf{Overview of simulation parameters.}}\label{n_table:1}
\begin{small}
\begin{tabular}{ l   l  } 
 \hline
 \textbf{SIMULATION PARAMETERS} &   \textbf{METIS-like simulation}  \\
  \hline
 telescope diameter & $37$~m    \\
 central obstruction & $30\%$  \\
science target & on-axis (SCAO)   \\ 
WFS & PWFS   \\
sensing band $\lambda$ & K ($2.2$~$\mu$m)   \\
evaluation band $\lambda_{science}$ & K ($2.2$~$\mu$m)  \\
modulation  & $\left[0,4\right]$ $\lambda/D$  \\
controller &  integrator  \\
atmospheric model & von Karman   \\
number of simulated layers & $35$   \\
outer scale $L_0$ & $25$~m   \\
atmosphere & median   \\
Fried radius $r_0$ at $\lambda = 500$~nm & $0.157$~m   \\
number of subapertures & $74 \times 74$  \\
number of active subapertures & $3912$ out of $5476$  \\
linear size of simulation grid & $740$ pixels \\
DM geometry & ELT M4 model \\
DM delay & $1$  \\
frame rate &  $\left[1000,500\right]$~Hz   \\ 
detector read-out noise & $1$ electron/pixel \\
background flux & $0.000321$ photons/pixel/frame \\
photon flux &   $[50,100,1000,10000]$ ph/subap/frame  \\
time steps &  $500$  \\ 
 \hline
\end{tabular}
\end{small}
\end{center}
\end{table}

\begin{figure}[htbp]
%\captionsetup{width=.7\linewidth}
\centering
\includegraphics[scale=0.75]{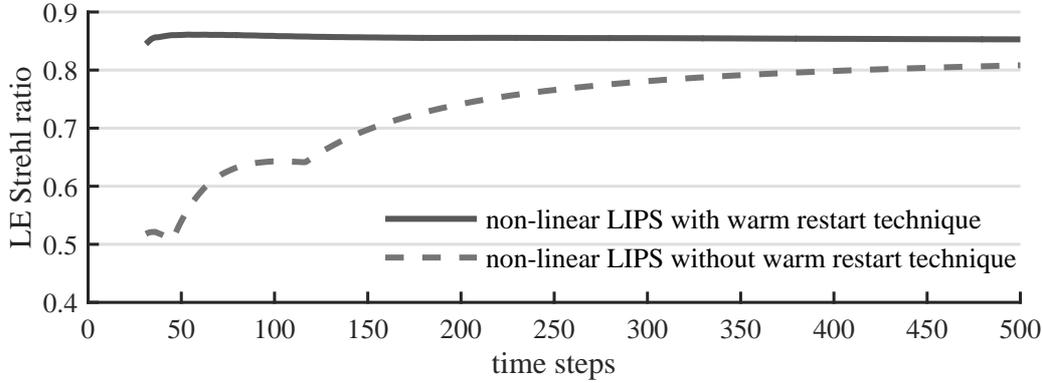}
\caption{Warm restart effect for the non-modulated sensor: Without using the warm restart technique (dotted line) the reconstructor suffers from slower convergence compared to the method with a warm restart (solid line). The results correspond to the LIPS approach for $10000$ ph/subap/frame.}
\label{fig:warm_restart}
\end{figure}

The number of Landweber or Landweber-Kaczmarz iterations is limited to $5$. This allows for gains in savings with respect to the computational load at the expense of loosing the rather minor quality improvements delivered by further iterations. As already discussed, we employ the warm restart technique. The initial guess for the first time step is chosen as zero. Then, for time steps $t>0$ the initial value is assigned the reconstruction of the last step. Omitting the warm restart but performing the same number of iterations per time step, we observe that the convergence is severely hampered as shown in Fig.~\ref{fig:warm_restart} for the non-modulated sensor. However, the warm restart technique was not effective for the modulated sensor. \bigskip

\begin{table}
\renewcommand{\arraystretch}{1.2}
%\begin{table}
\begin{center}
\caption{\textbf{LE Strehl ratios in the K-band for PWFS without modulation.}}\label{table:numerical_results}
\begin{small}
\begin{tabular}{ l c c c c } 
 \hline
 \textbf{photon flux} & \textbf{non-lin. LIPS}  & \textbf{non-lin. KLIPS} & \textbf{lin. LIPS \cite{HuSha18_2}} & \textbf{lin. KLIPS \cite{HuSha18_2}} \\
 \hline
50 &  0.8520 & 0.8517 & 0.8332 & 0.8371 \\
100 & 0.8534 & 0.8534& 0.8384 & 0.8415  \\
1000 & 0.8530  & 0.8531 & 0.8395 &  0.8420 \\
10000 & 0.8529  & 0.8530 & 0.8396 & 0.8419  \\
 \hline
\end{tabular} 
\end{small}
\end{center}
\end{table}

\begin{figure}[htbp]
%\captionsetup{width=.7\linewidth}
\centering
\includegraphics[scale=0.65]{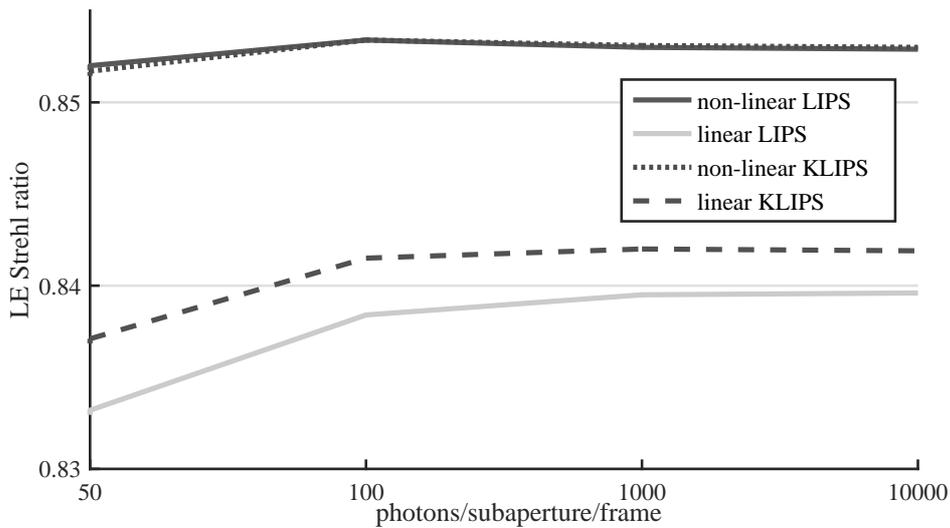}
\caption{Comparison of non-linear and linear methods for the non-modulated sensor: The non-linear algorithms outperform the linear approaches. While for the non-linear implementations there is almost no difference between the Landweber (solid lines) and the Landweber-Kaczmarz method (dashed lines), the Kaczmarz version of the linear reconstructors gives a higher improvement in quality.}
\label{fig:comparison_linear}
\end{figure}

\begin{table}
\renewcommand{\arraystretch}{1.2}
%\begin{table}
\begin{center}
\caption{\textbf{LE Strehl ratios in the K-band for PWFS with modulation.}}\label{table:numerical_results4}
\begin{small}
\begin{tabular}{ l  c c c c } 
 \hline
 \textbf{photon flux} & \textbf{non-lin. LIPS}  & \textbf{non-lin. KLIPS} & \textbf{lin. LIPS \cite{HuSha18_2}} & \textbf{lin. KLIPS \cite{HuSha18_2}} \\
 \hline
50 &  0.8155& 0.8145 & 0.8427 & 0.8439 \\
100 & 0.8253 & 0.8201 & 0.8517 & 0.8510  \\
1000 & 0.8333 & 0.8248 & 0.8590 &  0.8562 \\
10000 & 0.8342 & 0.8260 & 0.8595 & 0.8577  \\
 \hline
\end{tabular}
\end{small}
\end{center}
\end{table}

Since we observe a gain in performance by choosing a finer discretization, we divide every subaperture additionally into $3$ pixel areas. Thus, on cost of computation time, the quality is increased by about $0.034$ in terms of the LE Strehl ratio after $500$ time steps for the KLIPS in the high flux case, i.e., from $0.8193$ to $0.8530$ for the pyramid sensor without modulation. Additionally, we experience a faster convergence to higher Strehl ratios in case of a finer discretization.  \bigskip

As summarized in the first two columns of Table~\ref{table:numerical_results} for the non-modulated sensor and of Table~\ref{table:numerical_results4} for the modulated sensor, both proposed algorithms provide stable wavefront reconstruction with comparable results among each other in terms of LE Strehl ratios. 

 We obtain higher reconstruction quality with the non-modulated sensor compared to the modulated sensor in case we are using the proposed non-linear wavefront reconstruction methods. This may come from a higher sensitivity of the non-modulated sensor. Additionally, we experienced that for the modulated PWFS the non-linear algorithms are more sensitive with respect to loop gain and step size choices.\bigskip

\subsection{Comparison to the linear versions of the algorithms}

The sensor without modulation suffers from higher non-linearity effects compared to a sensor having an adequate modulation. In contrast, the expense of the improved linearity range of the modulated sensor is reduced sensitivity of the device \cite{Clare_2004,Fauv16,Fauv17,RaFa99,Veri04}. This property of the sensor gives us another reason why we are highly interested in an extension of the regime in which the non-modulated pyramid sensor effectually operates.

A comparison of the reconstruction quality using non-linear LIPS and non-linear KLIPS and their linear versions is given in Fig.~\ref{fig:comparison_linear} as well as Table~\ref{table:numerical_results} for the non-modulated sensor. For this sensor type, we obtain higher reconstruction quality with the non-linear processes. Additionally, the Kaczmarz-type methods outperform the standalone Landweber iterations in most of the cases. These conclusions cannot be directly transferred to the modulated sensor. The results for the modulated sensor in Fig.~\ref{fig:comparison_linear_mod} as well as Table~\ref{table:numerical_results4} indicate that the linear algorithms outmatch the non-linear approaches. This may be due to the fact that modulation increases the linearity of the pyramid sensor, and therefore linear algorithms are better suited. Usually, non-linear algorithms for solving inverse problems are more error-prone than linear methods. Due to, e.g., numerical errors, the less non-linear an underlying relation is, the better linear algorithms perform. Please note that in the linear algorithms, the CuReD \cite{Ros11,ZNRR11}, a reconstructor developed for Shack-Hartmann sensors, was applied for the first time steps of the AO loop in order to correct mainly for low frequencies. 

We infer that an application of non-linear reconstruction methods can notably improve the image quality when using a non-modulated pyramid sensor. For the modulated sensor, we recommend employing linear reconstruction algorithms at least as long as it is guaranteed that the residual phases being sensed by the wavefront sensor are small, e.g., in closed loop AO without significant non-common path errors of the system.

\begin{figure}[htbp]
%\captionsetup{width=.7\linewidth}
\centering
\includegraphics[scale=0.65]{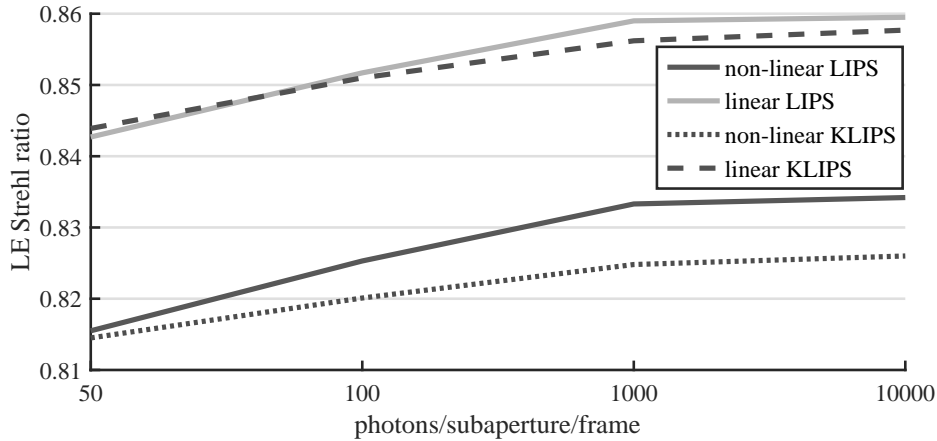}
\caption{Comparison of non-linear and linear methods for the modulated sensor: The linear algorithms clearly outmatch the non-linear processes. In the majority of cases the Landweber algorithms (solid lines) outperform the Landweber-Kaczmarz approaches (dashed lines).}
\label{fig:comparison_linear_mod}
\end{figure}

Concerning the computational complexity, the numerical effort of the proposed algorithm is comparable to the complexity of their linear versions being roughly $ \mathcal{O}\left(N_a^{3/2}\right)$ as derived in \cite{HuSha18_2}. Nevertheless, the amount of possible precomputations is larger in case of the linear versions of the algorithms which leads to a preference for linear methods with respect to the computational costs of online calculations.

%%%%%%%%%%%%%%%%%%%%%%%%%%%%%%%%%%%%%%%%%%%%%%%%%%%%%%%%%%%%%%%%%%%%%%%%%%%%%%%%%%%%%%%%%%%%%%%%%%%%%%%%%%
\section{Conclusion and Outlook}
%%%%%%%%%%%%%%%%%%%%%%%%%%%%%%%%%%%%%%%%%%%%%%%%%%%%%%%%%%%%%%%%%%%%%%%%%%%%%%%%%%%%%%%%%%%%%%%%%%%%%%%%%% 

We have established two methods, namely Landweber and Landweber-Kaczmarz iteration for pyramid sensors, for accurate and stable non-linear wavefront reconstruction. The theoretical background was accompanied by a first numerical evaluation of the reconstruction quality. Especially for the non-modulated sensor, the two algorithms provided outstanding performance, with the Landweber iteration being outmatched by its Kaczmarz version.
 Although, the Landweber method is known to converge slowly, we experienced accelerated convergence for AO closed loop simulations using only a small amount of Landweber or Landweber-Kaczmarz iterations per time step. For the sensor having no modulation applied, the warm restart technique additionally sped up the convergence. The low number of necessary iterations positively impacts the computational load of the algorithms whose complexity is given by roughly $\mathcal{O}\left(N_a^{3/2}\right)$.
 
According to the results obtained when using linear and non-linear reconstruction methods in end-to-end simulations, we propose choosing the way of reconstruction, i.e., linear or non-linear, dependent on the modulation of the pyramid sensor. For the modulated sensor, we recorded higher reconstruction quality for linear reconstructors while for the non-modulated sensor we experienced better correction for the non-linear methods. However, note that this conclusion was drawn from a limited number of closed loop simulations without critical effects perturbing the linearity of the sensor such as NCPAs.

As already mentioned, the assumption of small residual wavefronts being measured by the wavefront sensor and on account of this the linearity of the pyramid sensor may be violated by non common path errors of the system. Thus, the ability of existing linear and non-linear reconstruction strategies to deliver high quality wavefront corrections even under the impact of large NCPAs is of great interest. We plan to come back to this topic including a detailed investigation of the reconstruction performance for the proposed non-linear algorithms LIPS and KLIPS in the presence of realistic NCPAs in a subsequent paper. Generally, the influence of the magnitude of the incoming phase distortions on the reconstruction quality needs to be analyzed for both linear and non-linear reconstruction methods.

Obstruction effects induced by wide telescope spiders (secondary mirror support structures) were omitted in the simulations. The non-linear LIPS and KLIPS both offer the possibility to be combined with direct segment piston reconstructors according to the so called Split Approach described in \cite{HuShaOb18,ObRafShaHu18_proc}. Nevertheless, the stability of the algorithms for segmented pupils needs to be examined in detail.

Proper choices for the basis representations~\eqref{eq:im.1}~-~\eqref{eq:im.2}, e.g., using wavelets, may significantly improve the reconstruction quality and will be analyzed in future work. However, the representation using the characteristic functions of the subapertures as basis functions allows for offline precomputations, an advantage which may not exist for different choices of basis representations.

The two wavefront reconstruction methods presented in this paper are based on the roof wavefront sensor forward model while data are obtained from pyramid sensors. The derivation of the Fr\'{e}chet derivatives and their adjoint operators for the full pyramid sensor model are planned. Since the latter more precisely describes the pyramid wavefront sensor, we may gain in reconstruction performance. We will come back to this topic including detailed comparisons of pyramid and roof sensor models as basis in the reconstruction approach in an upcoming paper. 

At present, there already exist several methods that allow for accurate wavefront reconstruction such as MVM based approaches or the P-CuReD algorithm \cite{Shat13} with LE Strehl ratios around $0.89$ as summarized in, e.g., \cite{HuShaOb18}. Nevertheless, we want to emphasize the importance of the development of new algorithms since the high reconstruction performance with the linear methods is obtained in undisturbed closed loop AO systems. As recently as extensive studies on the behavior of the linear and non-linear algorithms in the presence of realistic ELT effects such as NCPAs, telescope spiders or the low wind effect have been performed, further conclusions on preferences can be drawn. Due to the non-linearity of the pyramid sensor, degradation of the image quality may appear for reconstructors which are based on the linearity assumption. Furthermore, the LIPS and KLIPS can easier be adapted to more precise pyramid sensor models or segmented telescope pupils as, for instance, the P-CuReD algorithm. The non-linear algorithms are in an early stage of development and improvements for future investigations are expected. 

\section*{Funding}
Austrian Federal Ministry of Science and Research (HRSM); Austrian Science Fund (Project F68-N36).

\section*{Acknowledgments}
The authors are grateful to the Austrian Adaptive Optics team for fruitful discussion and thank Kirk Soodhalter for his support. 

%%%%%%%%%%%%%%%%%%%%%%%%%%%%%%%%%%%%%%%%%%%%%%%%%%%%%%%%%%%%%%%%%%%%%%%%%%%%%%%%%%%%%%%%%%%%%%%%
\section*{Appendix}
%%%%%%%%%%%%%%%%%%%%%%%%%%%%%%%%%%%%%%%%%%%%%%%%%%%%%%%%%%%%%%%%%%%%%%%%%%%%%%%%%%%%%%%%%%%%%%%%
\begin{proof}[Proof of Proposition \ref{roof_frechet_adjoint}]
Since the proof for the operator in $y$-direction is analogous to that of the one in $x$-direction, we only perform the calculation for $\left(\left(\boldsymbol R_x^{\{n,c\}}\right)'(\Phi)\right)^*$. For the evaluation of the adjoints, we divide the Fr\'{e}chet derivatives into two parts 
$$\left(\boldsymbol R_x^{\{n,c\}}\right)'(\Phi) = \left(\boldsymbol T_1^{\{n,c\}}(\Phi)\right) - \left(\boldsymbol T_2^{\{n,c\}}(\Phi)\right)$$ with
{\begin{align*}
\left(\boldsymbol T_1^{\{n,c\}}(\Phi)\right)\psi\left(x,y\right)
&:=\mathcal{X}_{\Omega}\left(x,y\right)\left[\dfrac{1}{\pi} \int_{\Omega_y}{\dfrac{\cos{\left[\Phi(x',y)-\Phi(x,y)\right]}\times k^{\{n,c\}}\left(x'-x\right)\psi(x',y)}{x'-x}\ \mathrm{d}x'}\right], \\
\left(\boldsymbol T_2^{\{n,c\}}(\Phi)\right) \psi\left(x,y\right)
&:=\mathcal{X}_{\Omega}\left(x,y\right)\left[\dfrac{1}{\pi}  \int_{\Omega_y}{\dfrac{\cos{\left[\Phi(x',y)-\Phi(x,y)\right]}\times k^{\{n,c\}}\left(x'-x\right)\psi(x,y)}{x'-x}\ \mathrm{d}x'}\right].
\end{align*}
}For any $\psi, \varphi \in \mathcal{L}_2\left(\mathbb{R}^2\right)$ with support on the telescope pupil $\Omega$ we consider
{\small\begin{align*}
\langle \left(\boldsymbol T_1^{\{n,c\}}(\Phi)\right)\psi,\varphi\rangle_{ \mathcal{L}_2\left(\mathbb{R}^2\right)} &= 
\langle \left(\boldsymbol T_1^{\{n,c\}}(\Phi)\right)\psi,\varphi\rangle_{ \mathcal{L}_2\left(\Omega\right)}  \\ 
&=
 \int_{\Omega_y} \int_{\Omega_x}\mathcal{X}_{\Omega}\left(x,y\right)\left[\dfrac{1}{\pi}  \int_{\Omega_y}{\dfrac{\cos{\left[\Phi(x',y)-\Phi(x,y)\right]}\times k^{\{n,c\}}\left(x'-x\right)\psi(x',y)}{x'-x}\ \mathrm{d}x'}\right] \\
 &\times \varphi\left(x,y\right) \ \mathrm{d}y \ \mathrm{d}x \\
&= 
\int_{\Omega_y} \int_{\Omega_x}\psi\left(x',y\right)\dfrac{1}{\pi}  \int_{\Omega_y}{\dfrac{\cos{\left[\Phi(x',y)-\Phi(x,y)\right]}\times k^{\{n,c\}}\left(x'-x\right)\varphi(x,y)}{x'-x}\ \mathrm{d}x} \ \mathrm{d}y \ \mathrm{d}x' \\
&= 
\int_{\Omega_y} \int_{\Omega_x}\psi\left(x,y\right)\dfrac{1}{\pi} \int_{\Omega_y}{\dfrac{\cos{\left[\Phi(x,y)-\Phi(x',y)\right]}\times k^{\{n,c\}}\left(x-x'\right)\varphi(x',y)}{x-x'}\ \mathrm{d}x'} \ \mathrm{d}y \ \mathrm{d}x \\
&=
 \langle \psi,\left(\boldsymbol T_1^{\{n,c\}}(\Phi)\right)^*\varphi\rangle_{ \mathcal{L}_2\left(\Omega\right)} \\
&=
 \langle \psi,\left(\boldsymbol T_1^{\{n,c\}}(\Phi)\right)^*\varphi\rangle_{ \mathcal{L}_2\left(\mathbb{R}^2\right)} 
\end{align*}
}with { $$\left(\left(\boldsymbol T_1^{\{n,c\}}(\Phi)\right)^*\varphi\right)(x,y) =- \mathcal{X}_{\Omega}\left(x,y\right)\dfrac{1}{\pi} \int_{\Omega_y}{\dfrac{\cos{\left[\Phi(x',y)-\Phi(x,y)\right]}\times k^{\{n,c\}}\left(x'-x\right)\varphi(x',y)}{x'-x}\ \mathrm{d}x'}$$ }using the fact that $k^{\{n,c\}}$ and cosine are even functions.
The adjoints of the second part are derived by
{\small\begin{align*}
\langle \left(\boldsymbol T_2^{\{n,c\}}(\Phi)\right)\psi,\varphi\rangle_{ \mathcal{L}_2\left(\mathbb{R}^2\right)} 
&=\langle \left(\boldsymbol T_2^{\{n,c\}}(\Phi)\right)\psi,\varphi\rangle_{ \mathcal{L}_2\left(\Omega\right)} \\
&=
 \int_{\Omega_y} \int_{\Omega_x}\mathcal{X}_{\Omega}\left(x,y\right)\left[\dfrac{1}{\pi}  \int_{\Omega_y}{\dfrac{\cos{\left[\Phi(x',y)-\Phi(x,y)\right]}\times k^{\{n,c\}}\left(x'-x\right)\psi(x,y)}{x'-x}\ \mathrm{d}x'}\right] \\
 &\times\varphi\left(x,y\right) \ \mathrm{d}y \ \mathrm{d}x \\
&= 
\int_{\Omega_y} \int_{\Omega_x}\psi\left(x,y\right) \dfrac{1}{\pi} \int_{\Omega_y}{\dfrac{\cos{\left[\Phi(x',y)-\Phi(x,y)\right]}\times k^{\{n,c\}}\left(x'-x\right)\varphi(x,y)}{x'-x}\ \mathrm{d}x'} \ \mathrm{d}y \ \mathrm{d}x \\
&=
 \langle \psi,\left(\boldsymbol T_2^{\{n,c\}}(\Phi)\right)^*\varphi\rangle_{ \mathcal{L}_2\left(\Omega\right)} \\
&=  \langle \psi,\left(\boldsymbol T_2^{\{n,c\}}(\Phi)\right)^*\varphi\rangle_{ \mathcal{L}_2\left(\mathbb{R}^2\right)}, 
\end{align*}
}which results in {$$\left(\left(\boldsymbol T_2^{\{n,c\}}(\Phi)\right)^*\varphi\right)(x,y) =\mathcal{X}_{\Omega}\left(x,y\right) \dfrac{1}{\pi} \int_{\Omega_y}{\dfrac{\cos{\left[\Phi(x',y)-\Phi(x,y)\right]}\times k^{\{n,c\}}\left(x'-x\right)\varphi(x,y)}{x'-x}\ \mathrm{d}x'}.$$
}Hence, the adjoints of the roof sensor Fr\'{e}chet derivatives are given by
{\small\begin{align*}
\left(\left(\boldsymbol R_x^{\{n,c\}}\right)'(\Phi)\right)^*\psi\left(x,y\right) &= \left(\left(\boldsymbol T_1^{\{n,c\}}(\Phi)\right)^*\psi\right)\left(x,y\right) -\left( \left(\boldsymbol T_2^{\{n,c\}}(\Phi)\right)^*\psi\right)\left(x,y\right) \\
&=-\mathcal{X}_{\Omega}\left(x,y\right)\dfrac{1}{\pi} \int_{\Omega_y}{\dfrac{\cos{\left[\Phi(x',y)-\Phi(x,y)\right]}\times k^{\{n,c\}}\left(x'-x\right)\left[\psi(x',y)+\psi(x,y)\right]}{x'-x}\ \mathrm{d}x'}.
\end{align*} }
\end{proof}
\bigskip \par

\begin{proof}[Proof of Proposition \ref{prop_discretization}]
Utilizing the representation~\eqref{eq:im.1} we obtain
{\begin{align*}
\left( \boldsymbol R_x^{\{n,c\}} \Phi\right) (x,y) 
&=\mathcal{X}_{\Omega}(x,y)   \dfrac { 1 } { \pi }\int_{\Omega_y} \dfrac{ \sin [ \Phi (x',y) - \Phi (x,y) ] \times k^{\{n,c\}} (x'-x) } {x'-x} \ \mathrm{d}x' \\
&=\mathcal{X}_{\Omega}(x,y)   \dfrac { 1 } { \pi } \int_{\Omega_y} \dfrac{ \sin [ \sum\limits_{i,j=1}^N\Phi_{ij}\mathcal{X}_{\Omega_{ij}}(x',y) - \sum\limits_{i,j=1}^N\Phi_{ij}\mathcal{X}_{\Omega_{ij}}(x,y) ] \times k^{\{n,c\}} (x'-x) } {x'-x} \ \mathrm{d}x' \\
&=\mathcal{X}_{\Omega}(x,y)   \dfrac { 1 } { \pi }  \int_{\Omega_y} \dfrac{ \sin [ \sum_{i,j=1}^N\Phi_{ij}\mathcal{X}_{\Omega_{ij}}(x',y) -\Phi_{ij}\mathcal{X}_{\Omega_{ij}}(x,y) ] \times k^{\{n,c\}} (x'-x) } {x'-x} \ \mathrm{d}x'.
\end{align*}
}The application of the non-linear roof sensor evaluated at a point $\left(x_m,y_n\right) \in \Omega_{mn} \subset \Omega_y\times\Omega_x$ is represented by
{\small\begin{align*}
\left( \boldsymbol R_x^{\{n,c\}} \Phi\right)\left(x_m,y_n\right) 
&=\mathcal{X}_{\Omega}(x_m,y_n)   \dfrac { 1 } { \pi }  \int_{\Omega_n} \dfrac{ \sin [ \sum\limits_{i,j=1}^N\Phi_{ij}\mathcal{X}_{\Omega_{ij}}(x',y_n) -\Phi_{ij}\mathcal{X}_{\Omega_{ij}}(x_m,y_n) ] \times k^{\{n,c\}} (x'-x_m) } {x'-x_m} \ \mathrm{d}x' \\
&=  \dfrac { 1 } { \pi }  \int_{\Omega_n} \dfrac{ \sin [ \sum_{i=1}^N\Phi_{in}\mathcal{X}_{\Omega_{in}}(x') -\Phi_{mn} ] \times k^{\{n,c\}} (x'-x_m) } {x'-x_m} \ \mathrm{d}x' 
\end{align*}
}for $m,n \in \mathbb{N}, 1 \le m,n\le N$. 
As the subapertures are disjoint, these considerations result in 
 \begingroup
 \allowdisplaybreaks
\begin{align*}
\left( \boldsymbol R_x^{\{n,c\}} \Phi\right)_{m,n} 
&=  \dfrac { 1 } { \pi }\int_{\Omega_n} \dfrac{ \sin [ \sum_{i=1}^N\Phi_{in}\mathcal{X}_{\Omega_{in}}(x') -\Phi_{mn} ] \times k^{\{n,c\}} (x'-x_m) } {x'-x_m} \ \mathrm{d}x'  \\
&=  \dfrac { 1 } { \pi } \int_{\Omega_{1n}} \dfrac{ \sin [ \sum_{i=1}^N\Phi_{in}\mathcal{X}_{\Omega_{in}}(x') -\Phi_{mn} ] \times k^{\{n,c\}} (x'-x_m) } {x'-x_m} \ \mathrm{d}x' \\
&+  \dfrac { 1 } { \pi }\int_{\Omega_{2n}} \dfrac{ \sin [ \sum_{i=1}^N\Phi_{in}\mathcal{X}_{\Omega_{in}}(x') -\Phi_{mn} ] \times k^{\{n,c\}} (x'-x_m) } {x'-x_m} \ \mathrm{d}x'  + \dots \\
&+  \dfrac { 1 } { \pi }\int_{\Omega_{Nn}} \dfrac{ \sin [ \sum_{i=1}^N\Phi_{in}\mathcal{X}_{\Omega_{in}}(x') -\Phi_{mn} ] \times k^{\{n,c\}} (x'-x_m) } {x'-x_m} \ \mathrm{d}x'  \\
&= \dfrac { 1 } { \pi }\int_{\Omega_{1n}} \dfrac{ \sin [ \Phi_{1n} -\Phi_{mn} ] \times k^{\{n,c\}} (x'-x_m) } {x'-x_m} \ \mathrm{d}x' \\
&+  \dfrac { 1 } { \pi }\int_{\Omega_{2n}} \dfrac{ \sin [\Phi_{2n} -\Phi_{mn} ] \times k^{\{n,c\}} (x'-x_m) } {x'-x_m} \ \mathrm{d}x'  + \dots \\
&+  \dfrac { 1 } { \pi }\int_{\Omega_{Nn}} \dfrac{ \sin [ \Phi_{Nn} -\Phi_{mn} ] \times k^{\{n,c\}} (x'-x_m) } {x'-x_m} \ \mathrm{d}x'  \\ 
&= \dfrac { 1 } { \pi }\ \sum_{\substack{i=1\\i \neq m}}^N \sin\left[\Phi_{in}-\Phi_{mn}\right]\int_{\Omega_{in}} \dfrac{  k^{\{n,c\}} (x'-x_m) } {x'-x_m} \ \mathrm{d}x' \\
&= \dfrac { 1 } { \pi }\ \sum_{\substack{i=1\\i \neq m}}^N \sin\left[\Phi_{in}-\Phi_{mn}\right] \alpha_{in}^{\{n,c\}}\left(x_m\right).
\end{align*}
\endgroup
 Analogously, for the adjoints of the Fr\'{e}chet derivatives using the representation~\eqref{eq:im.1} for both $\Phi$ and $\Psi$, we obtain
\begin{align*}
\left(\left(\boldsymbol R_x^{\{n,c\}}\right)'(\Phi)\right)^* \psi\left(x_m,y_n\right)
&=-\mathcal{X}_{\Omega}(x_m,y_n) \dfrac{1}{\pi}  \int_{\Omega_n}\dfrac{\cos{\left[\Phi(x',y_n)-\Phi(x_m,y_n)\right]}\times k^{\{n,c\}}\left(x'-x_m\right)}{x'-x_m}\\
&\times \left[\psi(x',y_n)+\psi(x_m,y_n)\right]\mathrm{d}x'\\
&=-\dfrac{1}{\pi} \int_{\Omega_n}{\dfrac{\cos{\left[\sum_{i,j=1}^N\Phi_{ij}\mathcal{X}_{\Omega_{ij}}(x',y_n)-\Phi_{ij}\mathcal{X}_{\Omega_{ij}}(x_m,y_n)\right]} }{x'-x_m}\ } \\
&\times k^{\{n,c\}}\left(x'-x_m\right)\left[\sum_{k,l=1}^N\psi_{kl}\mathcal{X}_{\Omega_{kl}}(x',y_n)+\psi_{kl}\mathcal{X}_{\Omega_{kl}}(x_m,y_n)\right] \ \mathrm{d}x' \\
&=-\dfrac{1}{\pi}  \int_{\Omega_n}{\dfrac{\cos{\left[\sum_{i=1}^N\Phi_{in}\mathcal{X}_{\Omega_{in}}(x')-\Phi_{mn}\right]}\times k^{\{n,c\}}\left(x'-x_m\right)}{x'-x_m}} \\
&\times \left[\sum_{k=1}^N\psi_{kn}\mathcal{X}_{\Omega_{kn}}(x')+\psi_{mn}\right] \ \mathrm{d}x'.
\end{align*}
For disjoint subapertures $\Omega_{ij}$, this results in
\begin{align*}
\left(\left(\left(\boldsymbol R_x^{\{n,c\}}\right)'(\Phi)\right)^* \psi\right)_{mn} &=
-\dfrac{1}{\pi}  \int_{\Omega_n} \dfrac{\cos{\left[\sum_{i=1}^N\Phi_{in}\mathcal{X}_{\Omega_{in}}(x')-\Phi_{mn}\right]}\times k^{\{n,c\}}\left(x'-x_m\right)}{x'-x_m}\\
&\times \left[\sum_{k=1}^N\psi_{kn}\mathcal{X}_{\Omega_{kn}}(x')+\psi_{mn}\right] \mathrm{d}x'  \\
&=-\dfrac{1}{\pi}\sum_{i=1}^N \cos{\left[\Phi_{in}-\Phi_{mn}\right]}\left[\psi_{in}+\psi_{mn}\right]  \int_{\Omega_{in}} \dfrac{ k^{\{n,c\}}\left(x'-x_m\right)}{x'-x_m}\ \mathrm{d}x'  \\
&=-\dfrac{1}{\pi}\sum_{i=1}^N \cos{\left[\Phi_{in}-\Phi_{mn}\right]}\left[\psi_{in}+\psi_{mn}\right] \alpha_{in}^{\{n,c\}}\left(x_m\right)
\end{align*}
with functions $\alpha_{in}^{\{n,c\}}$ defined in~\eqref{eq:im.4}. 
\end{proof}

\bibliographystyle{plain}
\bibliography{mainArXiV}

\end{document}